\LetLtxMacro{\originaleqref}{\eqref}
\renewcommand{\eqref}{Eq.~\originaleqref}
\begin{document}
%% the square bracket argument will send term to running head in
%% preprint, or running foot in reprint style.

% Title.
% ------
% \title[]{Improved Energy Decay Model of the Image Source Method for a Shoebox-Shaped Room}
\title[Damping Density of a Shoebox Room]{Damping Density of an Absorptive Shoebox Room derived from the Image-Source Method}
% % 
% % Fast Stochastic Shoebox Synthesis - Improve Lehmann Method

% ie
%\title[JASA/Sample JASA Article]{Article title should be less than 17 words, no acronyms}

\author{Sebastian J. Schlecht}
\email{sebastian.schlecht@aalto.fi}
% \correspondingauthor

\altaffiliation{Also at: Department of Art and Media, Aalto University, P.O. Box 13100, Aalto, Finland}

\author{Karolina Prawda}

\affiliation{Acoustics Lab, Department of Information and Communications Engineering, Aalto University, P.O. Box 13100, FI-00076 Aalto, Finland}

\author{Rudolf Rabenstein}

\author{Maximilian Sch\"{a}fer}
\altaffiliation{The work of M. Sch\"{a}fer is funded by the Bavarian State Ministry of Science and the Arts in the framework of the bidt Graduate Center for Postdocs.}

\affiliation{Multimedia Communications and Signal Processing, Digital Communications, Friedrich-Alexander-Universit\"{a}t Erlangen-N\"{u}rnberg (FAU), Germany}

\altaffiliation{All authors have contributed equally.}

%% for the corresponding author
% \email{}

% \maketitle
%
\begin{abstract}
The image-source method is widely applied to compute room impulse responses (RIRs) of shoebox rooms with arbitrary absorption. However, with increasing RIR lengths, the number of image sources grows rapidly, leading to slow computation. In this paper, we derive a closed-form expression for the damping density, which characterizes the overall multi-slope energy decay. The omnidirectional energy decay over time is directly derived from the damping density. The resulting energy decay model accurately matches the late reverberation simulated via the image-source method.   
The proposed model allows the fast stochastic synthesis of late reverberation by shaping noise with the energy envelope. Simulations of various wall damping coefficients demonstrate the model's accuracy. The proposed model consistently outperforms the energy decay prediction accuracy compared to a state-of-the-art approximation method. The paper elaborates on the proposed damping density's applicability to modeling multi-sloped sound energy decay, predicting reverberation time in non-diffuse sound fields, and fast frequency-dependent RIR synthesis.
\end{abstract} 
\maketitle

% \begin{keywords}
% Room acoustics, Image Source Model, Damping Density, Energy Decay 
% \end{keywords}

\section{Introduction}\label{sec:intro}
Accurately simulating room impulse responses (RIRs) is vital for many applications in room acoustics and signal processing, including tasks like source separation, speech enhancement, dereverberation \cite{Aknin:2021.StochasticReverberationModel}, and architectural acoustics \cite{Savioja:2015ft}. The image-source method (ISM) is frequently employed for modeling RIRs in shoebox rooms \cite{Allen:1979cn, Habets:2006.RIRgen}. The shoebox room is also a standard test case to discuss room acoustical theory \cite{Kuttruff:2009vl}.

Once surpassing a specific time point, termed the \textit{transition time} \cite{Badeau:2019.FrameworkStochastic}, and exceeding a specific frequency known as the \textit{Schroeder frequency} \cite{Kuttruff:2009vl, Badeau:2019.FrameworkStochastic}, RIRs can be modeled stochastically \cite{Aknin:2021.StochasticReverberationModel, Badeau:2019.FrameworkStochastic, Badeau:2018um, Aknin:2020.StochasticImageSource}. Lehmann and Johansson introduced a method to offer a stochastic approximation of the ISM for an absorptive shoebox room \cite{Lehmann:2010kh,Lehmann:2007fi,Lehmann:2008co}. This methodology is especially advantageous during late reverberation phases as the ISM can demand significant computational resources to achieve accurate outcomes.

% One of the frequent problems in room acoustic measurements is the presence of multi-slope decays, which may impair the estimation of certain acoustic parameters, such as reverberation time and sound-absorption coefficient \cite{hunt39, Balint:2019.Bayesian}. Therefore, determining the number of slopes in RIR's decay is often crucial. In the literature, the number of different decay times, and thus slopes, is defined based on the modal representation of RIRs. The modes are grouped according to either their type \cite{jacobsen2013fundamentals, Meissner2016} or incidence angle \cite{hunt39_decay, berzborn2021estimation}, and each mode group is assigned its dominant decay time. Kuttruff suggested that the damping density might also be a good indicator of multi-slope decay \cite{Kuttruff:2009vl}. The damping density is derived from the modal decomposition of the wave equation, describing the system's physics  \cite{churchill:1972, Curtain-Zwart:infdimsysth:1995}.
% %\cite{Curtain-Zwart:infdimsysth:1995,rabenstein23}
% Classic solutions for the modal decomposition of lossless and lossy shoebox rooms were given in \cite{Kuttruff:1958uq, Filippi99}. 

We have found several choices in the derivations in \cite{Lehmann:2010kh,Lehmann:2007fi,Lehmann:2008co}, which lead to poor estimation for lesser absorptive rooms. We refer to this state-of-the-art method as Lehmann's method. In this work, we derive the late energy decay of a shoebox room based on a stochastic formulation of the ISM. We derive the directional and then direction-integrated damping density. The closed-form expression of the damping density is also compatible with our previously proposed model based on modal approximation \cite{schaefer2023}. 

As an application, we propose a fast method to obtain multi-slope decay characteristics of the late reverberation that shows excellent agreement with the ISM. A comparison with Lehmann's method \cite{Lehmann:2010kh} shows an improved prediction of the energy decay curve. Also, the proposed method does not require an energy matching with the ISM solution, simplifying the synthesis process. As a second application, the damping density can be used to predict the reverberation time (RT). We show the prediction of the damping density in comparison to other RT prediction formulas. For machine learning applications, the closed form of the damping density is amenable for automatic differentiation and can be used to solve inverse acoustic problems similar to \cite{Zhi.2023:DiffISM}.

The paper is organized as follows. Section~\ref{sec:problemDef} introduces the ISM and problem formulation. Section~\ref{sec:method} proposes a novel closed-form expression for the damping density. Section~\ref{sec:results} presents the experimental validation and comparison to the ISM and Lehmann's method. Section~\ref{sec:application} shows applications of the damping density to predict RT and synthesize late reverberation efficiently and accurately. All Matlab code necessary to use the method and reproduce the figures are available online \cite{code}.

\begin{figure*}[!t]
\includegraphics[trim=0cm 0cm 0cm 0cm,width=\textwidth]{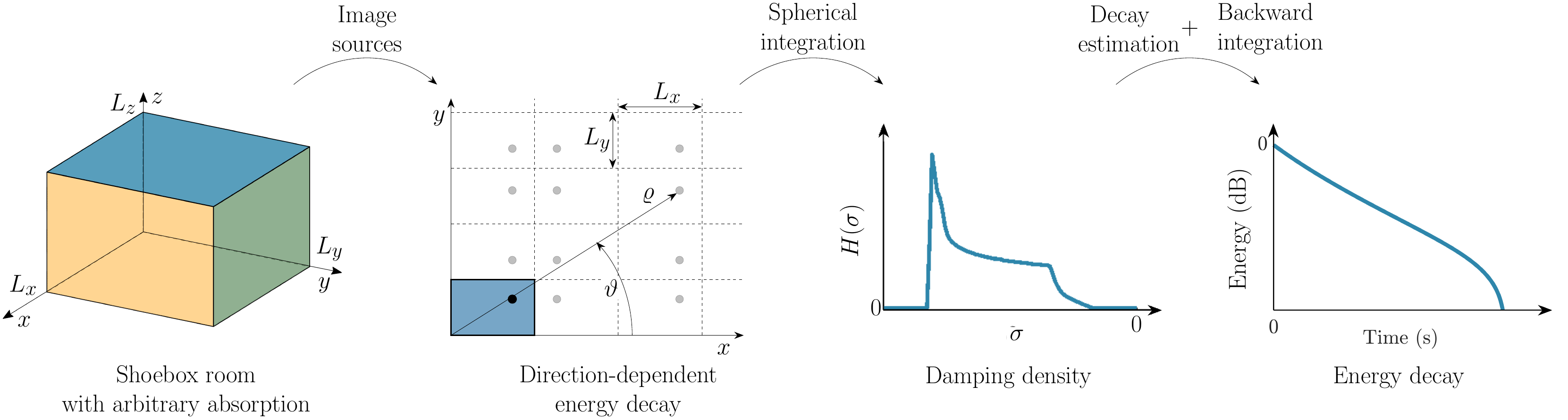}
  \vspace*{-0.6cm}
  \caption{\small Conceptual overview of this paper. From left to right: a shoebox room with dimensions $\Lx, \Ly, \Lz$ and volume $V = \Lx \Ly \Lz$, having non-rigid walls, each characterized by a frequency-independent reflection factor $\r$. We obtain the direction-dependent energy decay from the ISM. From that, we derive the damping density $H(\damping)$ by spherical integration. In the end, the RIR's EDC is estimated from $H(\damping)$ by \eqref{eq:dampingDensity} and backward integration in \eqref{eq:edc}.
  }
  \label{fig:ConceptDrawing}
  \vspace*{-0.4cm}
\end{figure*}

\section{Problem Definition}
\label{sec:problemDef}
 
The coordinate system uses polar angle $\varphi \in[0, \pi]$ and azimuth angle 
$\vartheta \in[0, 2\pi[$. We use a time-dependent propagation distance $\varrho \triangleq \varrho(t)=c t$, where $c$ is the speed of sound in $\si{\meter\per\second}$ and $t$ is time in $\si{\second}$. 

The shoebox room is placed such that the center is at the origin of the coordinate system. The six reflection coefficients are denoted by $\rx, \rxx, \ry, \ryy, \rz, \rzz$. The room size is $\Lx, \Ly, \Lz$ in $\si{\meter}$.

The RIR $h(t)$ generated by ISM \cite{Allen:1979cn} is given by 
\begin{equation}
\begin{aligned}
h(t) &= \sum_{\substack{\vec{m} \in \mathcal{M} \\ \vec{p} \in \mathcal{P}}} \frac{g_{\textrm{ext}}(\dist(t)-d_{\vec{m},\vec{p}})}{4 \pi \dist(t)} \rx^{\abs*{\mx-q}} \rxx^{\abs*{\mx}} \\ 
& \ry^{\abs*{m_y-j}} \ryy^{\abs*{\my}} \rz^{\abs*{\mz-k}} \rzz^{\abs*{\mz}} ,
\label{eq:ISM}	
\end{aligned}
\end{equation}
where $\mathcal{M}=\left\{\left(\mx, \my, \mz\right):-N \leq \mx, \my, \mz \leq N\right\}$, $\mathcal{P}=\left\{\left(q,j,k\right):0 \leq q,j,k \leq 1\right\}$, $g_\mathrm{ext}$ is the excitation term \cite{schaefer2023}, and $d_{\vec{m},\vec{p}}$ is the distance of the image source to the receiver. 

The power response is obtained by squaring $h(t)$ from \eqref{eq:ISM} and short-time averaging as follows
\begin{equation}
% \begin{aligned}
\powerResponse(t) = \langle h(t)^2 \rangle.
\label{eq:powerResponse}
% \end{aligned}
\end{equation}

Applying a Schroeder backward integration to the power response, the energy decay curve (EDC) can be obtained as follows 
\begin{equation}
\begin{aligned}
\mathrm{EDC}(t) =\int_t^{\infty} \powerResponse(\tau) \mathrm{~d} \tau.
\label{eq:edc}	
\end{aligned}
\end{equation}
In Fig.~\ref{fig:ConceptDrawing}, the conceptual overview of the proposed method is shown. Instead of a direct derivation from the RIR, as in \eqref{eq:powerResponse}, we first derive the damping density $H(\sigma)$ \cite{Kuttruff:1958uq, schaefer2023} to obtain the power response as follows 
\begin{equation}
\begin{aligned}
\powerResponse(c t) = \powerResponse(\varrho) = \int_{-\infty}^0 H(\sigma) e^{\sigma \varrho} \mathrm{d} \sigma.
\label{eq:dampingDensity}
\end{aligned}
\end{equation} 
The damping density $H(\sigma)$ is, therefore, a distribution of energy decays over time. More precisely, $H(\sigma)$ is the initial energy of the slope, which decays exponentially by $\sigma$. The power response $\powerResponse$ is the superposition of all possible exponential decay. Thus, the damping density $H(\sigma)$ is an insightful representation of the late RIR. In the following, we propose a closed-form expression to compute the damping density from the shoebox-room parameters. 

% \begin{figure}[!th]
%   \includegraphics[width=\columnwidth]{Figures/coordinateSystem.png}
%   \caption{Definition of the spherical coordinate system \cite{Lehmann:2008co}.\maximilian{Please make the arrows a bit thicker, and I suggest to color the arrow for $\varrho$ }}
%   \label{fig:coordinateSystem}
% \end{figure}

\section{Proposed Method} \label{sec:method}
In the following, we derive the damping density $H(\sigma)$, where Theorem~\ref{th:sphere} is the main result. The proposed method is related to Lehmann's method, which, however, directly derives the power response $\powerResponse$. We present a few corrections compared to Lehmann's method \cite{Lehmann:2008co}, which are described in the footnotes. First,  in Sec.~\ref{subsec:dirdamp}, the directional damping density is derived from the ISM based on the stochastic approximation of the late reverberation. Second, in Secs.~\ref{subsec:intAz} and \ref{subsec:intPol}, we integrate this directional damping density over the sphere to obtain the omnidirectional damping density. 

To illustrate the concepts and derivations, we use a running example of a shoebox room with $\rx, \rxx, \ry, \ryy, \rz, \rzz = [-1, -1, -3, -2, -2, -5] (\si{\decibel})$  = $[0.891,  0.891,  0.707,  0.794,  0.794,  0.562] $ and room size $\Lx, \Ly, \Lz = [4,5,3] \si{\meter}$.

\subsection{Directional Damping Density}
\label{subsec:dirdamp}
The \textit{directional damping density} characterizes the energy decay of the late reverberation in a specific direction relative to the center of the shoebox room. We derive the directional damping density from a stochastic approximation of the ISM.

From \eqref{eq:ISM}, the expected energy of an image source placed at position $\varrho, \vartheta, \varphi$ can be expressed as follows 
\begin{equation}
\begin{aligned}
P(\varrho, \vartheta, \varphi) = \frac{(\rx\rxx)^{\Wx} (\ry\ryy)^{\Wy} (\rz\rzz)^{\Wz}}{(4 \pi \varrho )^2},
\label{eq:ISpower}	
\end{aligned}
\end{equation}
where $1 / (4\pi\varrho)^2$ is the energetic distance attenuation of a spherical wave \cite{Lehmann:2008co} and the coefficients in the exponents 
% \begin{equation}
% \begin{aligned}
% & \Wx=\frac{\varrho}{\Lx}\left(1-\frac{2 \vartheta}{\pi}\right) \frac{2 \varphi}{\pi}, \\
% & \Wy=\frac{\varrho}{\Ly} \frac{2 \vartheta}{\pi} \frac{2 \varphi}{\pi}, \\
% & \Wz=\frac{\varrho}{\Lz}\left(1-\frac{2 \varphi}{\pi}\right) .
% \end{aligned}
% \end{equation}
\begin{equation}
\begin{aligned}
& \Wx=\frac{\varrho}{\Lx} \abs*{ \cos(\vartheta) \sin(\varphi)}, \\
& \Wy=\frac{\varrho}{\Ly} \abs*{\sin(\vartheta) \sin(\varphi)}, \\
& \Wz=\frac{\varrho}{\Lz} \abs*{\cos(\varphi)},
\end{aligned}
\end{equation}
are the expected number of wall hits in the respective axis\footnote{Compared to \cite{Lehmann:2008co}: Change to spherical coordinates.}.
With the average number of reflections at distance $\varrho$, given by \cite[Eq.~(4.5)]{Kuttruff:2009vl}\footnote{Compared to \cite{Lehmann:2008co}:  $\varrho^2$ instead of $\varrho$}
\begin{equation}
\begin{aligned}
\refldens(\varrho) = 4 \pi \frac{\varrho^2}{V},
\label{eq:reflectionDensity}	
\end{aligned}
\end{equation}
the power response $\powerResponse$ results from integrating the image source energy $P$ over the sphere\footnote{Compared to  \cite{Lehmann:2008co}: Added $\sin(\varphi)$ for correct integration over the sphere}
\begin{equation}
\powerResponse(\varrho)=\refldens(\varrho) \int_0^{\pi} \int_0^{2\pi} P(\varrho, \vartheta, \varphi) \sin(\varphi) \mathrm{d} \vartheta \mathrm{d} \varphi.
\label{eq:powerResponseSphere}
\end{equation}
We define the directional damping density $Q$ relative to the omnidirectional damping density by  
\begin{equation}
\begin{aligned}
H(\sigma) = \int_0^{\pi} \int_0^{2\pi} Q(\sigma, \vartheta, \varphi) \sin(\varphi) \dd{\vartheta} \dd{\varphi}.
\label{eq:directionalToOmniDensity}	
\end{aligned}
\end{equation}
By inserting \eqref{eq:directionalToOmniDensity} into \eqref{eq:dampingDensity}, comparing it to \eqref{eq:powerResponseSphere}, and by swapping the integration order, we can obtain the following expression of the directional damping density
\begin{equation}
\begin{aligned}
\refldens(\varrho) P(\varrho, \vartheta, \varphi) = \int_{-\infty}^0 Q(\sigma, \vartheta, \varphi) e^{\sigma \varrho} \mathrm{d} \sigma .
\label{eq:directionalDampingDensity}	
\end{aligned}
\end{equation}
The left-hand side of \eqref{eq:directionalDampingDensity} can also be expressed as 
\begin{equation}
    \refldens(\varrho) P(\varrho, \vartheta, \varphi) 
    = \frac{1}{4 \pi V} \exp(\varrho M(\vartheta, \varphi)),
    \label{eq:DP2}
\end{equation}
with the function 
\begin{equation}
\begin{aligned}
    M(\vartheta, \varphi) &= \Kx \abs*{\cos(\vartheta) \sin(\varphi)} \nonumber\\&+ \Ky \abs*{\sin(\vartheta) \sin(\varphi)} + \Kz \abs*{\cos(\varphi)},
    \label{eq:functionM}
\end{aligned}
\end{equation}
and the coefficients $K$ in $\si{\per\meter}$, depending on the reflection coefficients and room dimensions 
\begin{equation}
\begin{aligned}
& \Kx= \ln \p*{\rx \rxx} / \Lx, \\
& \Ky= \ln \p*{\ry \ryy}/ \Ly, \\
& \Kz= \ln\p*{\rz \rzz}/ \Lz.
\label{eq:KxKyKz}
\end{aligned}
\end{equation}
% we can now express the left-hand side of \eqref{eq:directionalDampingDensity} as
% \begin{equation}
%     \begin{split}
%     \frac{8 \varrho^2}{\bar{r}} P(\varrho, \vartheta, \varphi) &= \frac{1}{\bar{r} 2 \pi^2} \exp \left(\varrho \left( \Kx \left(1-\frac{2 \vartheta}{\pi}\right) \frac{2 \varphi}{\pi} \right. \right. \\ 
%     &\left. \left. + \Ky \frac{2 \vartheta}{\pi} \frac{2 \varphi}{\pi} + \Kz \left(1-\frac{2 \varphi}{\pi}\right) \right) \right) \\
%     & = \frac{1}{\bar{r} 2 \pi^2} \exp(\varrho M(\vartheta, \varphi))
%     \end{split}
% \end{equation}
Exploiting \eqref{eq:DP2}, we get the right-hand side of \eqref{eq:directionalDampingDensity} to obtain an expression for the directional damping density
\begin{equation}
\begin{aligned}
Q(\sigma, \vartheta, \varphi) = \frac{1}{4 \pi V} \delta(\sigma - M(\vartheta, \varphi)),
\label{eq:directionalDampingDiracs}	
\end{aligned}
\end{equation}
where $\delta(\cdot)$ is the Dirac delta function. Inspecting \eqref{eq:directionalDampingDiracs}, it can be seen that the power response in each direction is always a single slope, so multi-slope decay only emerges by integrating over different directions.

To obtain the omnidirectional damping density $H(\sigma)$ in \eqref{eq:directionalToOmniDensity}, the directional damping density $Q(\sigma,\theta,\varphi)$ in \eqref{eq:directionalDampingDiracs} needs to be integrated over the sphere. 
%The function $M(\vartheta, \varphi)$ in \eqref{eq:functionM} describes an ellipsoid with the axes $\Kx$, $\Ky$, $\Kz$. 
In the following, we solve the integration in \eqref{eq:directionalToOmniDensity} in two steps. First, we integrate over the azimuth angle $\vartheta$, and second, we integrate over the polar angle $\varphi$.

% The integral we want to solve is the distribution of absolute distances of the ellipsoid surface to the origin, i.e., the l1-norm of $[x,y,z]$ \footnote{A special case for $\Kx = \Ky$ is given in \cite[(3.18)-(3.20)]{Parry:2000.Ellipsoid}, but only for the l2-norm.}

% Running a numerical simulation, Fig~\ref{fig:newLehmann} shows that the damping density coincides with the simple mode version.

\begin{figure}[!t]
  \includegraphics[]{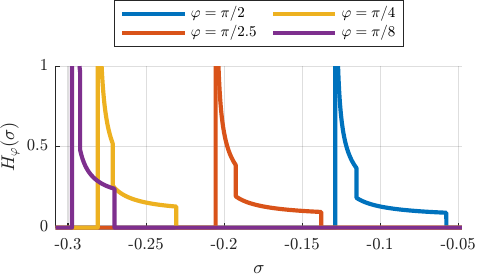}
  \caption{Horizontal slice of the damping density $H_{\varphi}(\sigma)$ in \eqref{eq:2DdampingDensity} for the example room at four different polar angles $\varphi$.}
  \label{fig:dampingDensity2D}
\end{figure}

\subsection{Integration over Azimuth Angle}
\label{subsec:intAz}
First, we integrate over the azimuth angle to obtain the damping density $H_\varphi(\sigma)$ for a horizontal slice by choosing a fixed $\varphi$. Due to the symmetry of the functions $M(\vartheta, \varphi)$ in every octant of the sphere, we conveniently choose the octant with $\varphi, \vartheta \in [0, \pi/2]$ such that $\cos(\vartheta) \sin(\varphi)$, $\sin(\vartheta) \sin(\varphi)$, $\cos(\varphi) \geq 0$.
At a fixed $\varphi$, function $M$ in \eqref{eq:functionM} is then 
\begin{equation}
\begin{aligned}
M(\vartheta, \varphi)\big\vert_{\varphi\in[0,\pi/2]} \!\!= M_{\varphi}(\vartheta) = 
-\sqrt{\KxSin^2 + \KySin^2}\cos\left(\vartheta + \phi\right) + \KzCos,
\label{eq:2D}
\end{aligned}
\end{equation}
with the variables 
\begin{equation}
\begin{aligned}
\KxSin &=  \Kx \sin(\varphi), &\KySin &= \Ky \sin(\varphi), \\
\KzCos &= \Kz \cos(\varphi), &\phi &= \arctan(-\KySin/\KxSin) = \arctan(-\Ky / \Kx). 
\label{eq:alphaBetaGamma}	
\end{aligned}
\end{equation}
For the chosen octant, $\varphi, \vartheta \in [0, \pi/2]$, we have $\KxSin$, $\KySin$, $\KzCos$, $\phi$ $\leq 0$. The values $\KxSin$, $\KySin$, and $\KzCos$ can be considered Cartesian coordinates of points on an ellipsoid. 

Fixing the angle $\varphi$ and inserting \eqref{eq:2D} into \eqref{eq:directionalDampingDiracs} simplifies the integral in \eqref{eq:directionalToOmniDensity} to a relation for the damping density $H_\varphi$ as follows
\begin{equation}
\begin{aligned}
H_{\varphi}(\sigma) &= 8 \int_0^{\pi / 2}  Q(\sigma, \vartheta, \varphi)  \mathrm{d} \vartheta \\
&= \frac{8}{4 \pi V} \int_0^{\pi / 2}  \delta(\sigma - M_{\varphi}(\vartheta))  \mathrm{d} \vartheta .
\label{eq:2DdirectionalToOmniDensity}	
\end{aligned}
\end{equation}
The factor of 8 is included to compensate for integrating over only one octant.

\begin{lemma} \label{th:horizontal}
The damping density $H_\varphi(\sigma)$ for a horizontal slice with fixed angle $\varphi\in[0,\,2\pi]$ is given by 
\begin{equation}
\begin{aligned}
H_{\varphi}(\sigma) = \frac{8}{4 \pi V} \frac{\mu_\varphi(\sigma - \KzCos)}{\sqrt{\KxSin^2 + \KySin^2 - (\sigma - \KzCos)^2}},
\label{eq:2DdampingDensity}	
\end{aligned}
\end{equation}
where $\mu_\varphi$ is an indicator function as defined in \eqref{eq:indicator}.
\end{lemma}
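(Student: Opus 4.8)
The plan is to evaluate the single integral in \eqref{eq:2DdirectionalToOmniDensity} directly, using the composition rule for the Dirac delta: if $f$ has only simple zeros $\vartheta_i$, then $\delta(f(\vartheta)) = \sum_i \delta(\vartheta - \vartheta_i)/\abs{f'(\vartheta_i)}$. Here $f(\vartheta) = \sigma - M_\varphi(\vartheta)$ with $M_\varphi$ given by \eqref{eq:2D}, so the task reduces to two subproblems: locating the zeros of $M_\varphi(\vartheta) = \sigma$ inside the octant $\vartheta \in [0,\pi/2]$, and evaluating $\abs{M_\varphi'}$ at those zeros.

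For the zeros, write $A_\varphi = \sqrt{\KxSin^2 + \KySin^2}$, so that $M_\varphi(\vartheta) = \sigma$ is equivalent to $\cos(\vartheta + \phi) = (\KzCos - \sigma)/A_\varphi$. A root can exist only if the right-hand side lies in $[-1,1]$, i.e. if $(\sigma - \KzCos)^2 \le \KxSin^2 + \KySin^2$ --- precisely the condition under which the square root in the denominator of \eqref{eq:2DdampingDensity} is real --- and, since $\vartheta$ is confined to one octant while $\phi \le 0$ shifts the cosine's argument, only certain branches of $\arccos$ yield an admissible $\vartheta$. The combination of these constraints is exactly what the function $\mu_\varphi$ in \eqref{eq:indicator} encodes; concretely, $\mu_\varphi(\sigma - \KzCos)$ counts the solutions of $M_\varphi(\vartheta) = \sigma$ in $[0,\pi/2]$ and vanishes when there are none.

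For the derivative, \eqref{eq:2D} gives $M_\varphi'(\vartheta) = A_\varphi \sin(\vartheta + \phi)$. At any zero $\vartheta_i$ we have $\cos(\vartheta_i + \phi) = (\KzCos - \sigma)/A_\varphi$, hence by the Pythagorean identity $\abs{\sin(\vartheta_i + \phi)} = \sqrt{1 - (\KzCos - \sigma)^2/A_\varphi^2}$ and therefore $\abs{M_\varphi'(\vartheta_i)} = \sqrt{A_\varphi^2 - (\sigma - \KzCos)^2} = \sqrt{\KxSin^2 + \KySin^2 - (\sigma - \KzCos)^2}$. The key point is that this value is independent of which zero $\vartheta_i$ is chosen, so it factors out of the sum and leaves behind the count $\mu_\varphi(\sigma - \KzCos)$ in the numerator; combining this with the prefactor $8/(4\pi V)$ from \eqref{eq:2DdirectionalToOmniDensity} reproduces \eqref{eq:2DdampingDensity}. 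Although the calculation is carried out for $\varphi \in [0,\pi/2]$, the octant symmetry of $M$ that already justified the factor of $8$ in \eqref{eq:2DdirectionalToOmniDensity} extends the formula to every $\varphi \in [0,2\pi]$.

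I expect the main obstacle to be the bookkeeping in the first subproblem: because $\vartheta$ runs over a single octant and the phase $\phi$ displaces the argument of the cosine, one must carefully decide which of the candidate angles $-\phi \pm \arccos\!\left(\frac{\KzCos - \sigma}{A_\varphi}\right)$ actually lie in $[0,\pi/2]$, and it is this case distinction that has to be packaged into the definition \eqref{eq:indicator} of $\mu_\varphi$. A minor additional point is the edge case $(\sigma - \KzCos)^2 = \KxSin^2 + \KySin^2$, where a zero of $M_\varphi$ coincides with a turning point, so $M_\varphi'$ vanishes there and \eqref{eq:2DdampingDensity} exhibits an inverse-square-root singularity; the composition formula fails at that single point, but the singularity is integrable and hence does not affect $H_\varphi$ interpreted as a distribution, so I would simply remark on this rather than treat it in detail.
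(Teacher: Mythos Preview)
Your proposal is correct and follows essentially the same route as the paper: apply the composition rule for the Dirac delta to $f_\sigma(\vartheta)=\sigma-M_\varphi(\vartheta)$, observe that $\lvert f_\sigma'(\vartheta_i)\rvert=\sqrt{\KxSin^2+\KySin^2-(\sigma-\KzCos)^2}$ is the same at both roots so it factors out, and record the number of roots landing in $[0,\pi/2]$ as the indicator $\mu_\varphi$. The paper carries out explicitly the bookkeeping you flag as the main obstacle (deriving the bounds in \eqref{eq:indicatorBounds} that give the two indicator pieces of $\mu_\varphi$), while your remark on the integrable endpoint singularity is an observation the paper does not make.
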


\begin{proof}
Please refer to Appendix~\ref{sec:appAzimuth}. 
\end{proof}

%%%%%%%%%%%%%%%%%%%%%%%%%%%%%%%%%%%%

Figure~\ref{fig:dampingDensity2D} shows the damping density $H_{\varphi}(\sigma)$ for different values of $\varphi$ for the example room. For $\varphi = \pi / 2$, we have $\KxSin = -0.057$, $\KySin = -0.115$, $\KzCos = 0$, and $-\sqrt{\KxSin^2 + \KySin^2} = -0.129$. The support of the distribution is between $\KxSin$ and $-\sqrt{\KxSin^2 + \KySin^2}$. Note the discontinuity at $\KySin$ is due to the step in the indicator function $\mu_\varphi$. When next integrating over the polar angle, this discontinuity will lead to multiple piecewise sections of the damping density.

\subsection{Integration over Polar Angle}
\label{subsec:intPol}
To derive the omnidirectional damping density $H(\sigma)$, we need to integrate $H_\varphi$ over the polar angle $\varphi$, i.e., 
\begin{equation}
\begin{aligned}
&H(\sigma) = \int_0^{\pi / 2} H_{\varphi}(\sigma) \sin(\varphi) \mathrm{d} \varphi.
\label{eq:3DdirectionalToOmniDensity2}
\end{aligned}
\end{equation}

The following theorem states the closed-form expression of the damping density, which is the main result of this work.
\begin{theorem} \label{th:sphere}
The omnidirectional damping density is
\begin{equation}
\begin{aligned}
H(\sigma) &= \frac{8}{4 \pi V} \p*{2 H_0(\sigma) - H_1(\sigma) - H_2(\sigma)} ,\\
H_i(\sigma) &= F(\sigma,u) \at_{\cos \hat{\varphi}_i^+}^{\cos \hat{\varphi}_i^-},
\label{eq:dampingDensityCloseForm}	
\end{aligned}
\end{equation}
with 
\begin{equation}
\begin{aligned}
F(\sigma,u) &= - \frac{1}{\sqrt{-A}} \arcsin(\frac{2A u + B}{\Delta}), \\
A &= - \Kx^2 - \Ky^2 - \Kz^2, \quad
B = 2 \sigma \Kz, \\
C &= \Kx^2 + \Ky^2 - \sigma^2, \quad
\Delta = \sqrt{B^2 - 4AC},
\label{eq:this1}	
\end{aligned}
\end{equation}
and evaluation limits
\begin{equation}
\begin{aligned}
% u_i &= \frac{1}{a_i^2 + b^2} \p*{- c a_i \mp b \sqrt{a_i^2 + b^2 - c^2}} \\
\hat{\varphi}_i^+ &= \max(\varphi_i^+,0), \quad
\hat{\varphi}_i^- = \min(\varphi_i^-,\pi/2), \\
\varphi_{i} &= \pm \arccos(-\frac{\sigma}{\sqrt{a_i^2 + \Kz^2}}) + \arctan( \frac{a_i}{\Kz}), \\
a_0 &= -\sqrt{\Kx^2 + \Ky^2}, \,
a_1 = \Kx, \, 
a_2 = \Ky . 
\label{eq:this2}	
\end{aligned}
\end{equation}
\end{theorem}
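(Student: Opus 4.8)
The plan is to substitute the horizontal-slice density of Lemma~\ref{th:horizontal} into the polar integral \eqref{eq:3DdirectionalToOmniDensity2} and to reduce the result to elementary integrals of the form $\int \mathrm{d}u/\sqrt{Au^2+Bu+C}$ by the substitution $u=\cos\varphi$. Inserting \eqref{eq:2DdampingDensity} into \eqref{eq:3DdirectionalToOmniDensity2} and using $\sin\varphi\,\mathrm{d}\varphi=-\mathrm{d}u$ and $\sin^2\varphi=1-u^2$, the radicand $\KxSin^2+\KySin^2-(\sigma-\KzCos)^2=(\Kx^2+\Ky^2)(1-u^2)-(\sigma-\Kz u)^2$ becomes, after expanding, exactly the quadratic $Au^2+Bu+C$ with $A,B,C$ as in \eqref{eq:this1}. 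Since $A<0$, this parabola in $u$ opens downward, so the square root is real precisely on the interval between its two roots; this later certifies that the $\arcsin$ primitive stays in its domain.

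The next step is to split the indicator $\mu_\varphi$ into a signed sum of three elementary step functions. As noted after Lemma~\ref{th:horizontal}, on the chosen octant $M_\varphi(\vartheta)$ in \eqref{eq:2D} has an interior minimum $-\sqrt{\KxSin^2+\KySin^2}+\KzCos$ and increases monotonically toward its endpoint values $\KxSin+\KzCos$ and $\KySin+\KzCos$; consequently the Dirac in \eqref{eq:directionalDampingDiracs} contributes from two branches when $\sigma-\KzCos$ lies below $\min(\KxSin,\KySin)$, from one branch between $\min(\KxSin,\KySin)$ and $\max(\KxSin,\KySin)$, and from none above. Hence $\mu_\varphi(\sigma-\KzCos)$ can be written as $2\,\mathbf{1}[\sigma-\KzCos>a_0\sin\varphi]-\mathbf{1}[\sigma-\KzCos>a_1\sin\varphi]-\mathbf{1}[\sigma-\KzCos>a_2\sin\varphi]$ with $a_0=-\sqrt{\Kx^2+\Ky^2}$, $a_1=\Kx$, $a_2=\Ky$ (using $\KxSin=\Kx\sin\varphi$, etc.), and integrating term by term yields $H(\sigma)=\frac{8}{4\pi V}\,(2H_0-H_1-H_2)$ with $H_i$ the contribution of the $i$-th indicator.

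For each $i$ I would then identify the $\varphi$-range on which the indicator equals $1$, i.e.\ where $a_i\sin\varphi+\Kz\cos\varphi<\sigma$. Writing the left-hand side in amplitude–phase form $-\sqrt{a_i^2+\Kz^2}\cos(\varphi-\arctan(a_i/\Kz))$ --- the sign and the $\arctan$ branch being fixed by $a_i,\Kz\le 0$, so that the minimum over $[0,\pi/2]$ occurs at the interior point $\varphi=\arctan(a_i/\Kz)$ --- the boundary equation $a_i\sin\varphi+\Kz\cos\varphi=\sigma$ has the two solutions $\varphi_i^{\pm}=\pm\arccos(-\sigma/\sqrt{a_i^2+\Kz^2})+\arctan(a_i/\Kz)$ of \eqref{eq:this2}, and the sub-level set is the interval between them. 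Intersecting this interval with $[0,\pi/2]$ produces the clipped limits $\hat\varphi_i^{\pm}$. On that interval the quadratic $Au^2+Bu+C$ stays nonnegative, so the standard primitive $\int\mathrm{d}u/\sqrt{Au^2+Bu+C}=-\frac{1}{\sqrt{-A}}\arcsin((2Au+B)/\Delta)=F(\sigma,u)$ applies; accounting for the sign in $\mathrm{d}u=-\sin\varphi\,\mathrm{d}\varphi$ and for $\cos$ being decreasing on $[0,\pi/2]$ gives $H_i=F(\sigma,u)\at_{\cos\hat\varphi_i^+}^{\cos\hat\varphi_i^-}$, i.e.\ \eqref{eq:dampingDensityCloseForm}. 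I would finish by checking the degenerate regime: when $\sigma<-\sqrt{a_i^2+\Kz^2}$ the $i$-th indicator is nowhere $1$ and $H_i$ is read as $0$, consistently with $\arccos$ leaving its domain there; for $i=0$ this is exactly the lower edge $\sigma=-\sqrt{-A}$ of the support of $H$.

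I expect the bookkeeping of the integration limits in the last step to be the main obstacle: one must (i) choose the amplitude–phase representation and $\arctan$ branch so that $\arctan(a_i/\Kz)$ really is the location of the relevant minimum, (ii) truncate the sub-level interval to $[0,\pi/2]$ correctly, obtaining the $\max/\min$ clipping in $\hat\varphi_i^{\pm}$, and (iii) track the piecewise structure of $H(\sigma)$ --- as $\sigma$ sweeps its range, the $\hat\varphi_i^{\pm}$ switch between the $\varphi_i^{\pm}$ and the clip values $0$ and $\pi/2$, so \eqref{eq:dampingDensityCloseForm} must be verified piece by piece, including the vanishing of each $H_i$ outside its support. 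By comparison, the substitution $u=\cos\varphi$, the identification of $A,B,C,\Delta$, and the $\arcsin$ primitive are routine.
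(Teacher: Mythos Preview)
Your proposal is correct and follows essentially the same route as the paper's proof in Appendix~\ref{sec:appPolar}: substitute $u=\cos\varphi$, identify the quadratic $Au^2+Bu+C$ and its $\arcsin$ primitive, rewrite $\mu_\varphi$ as the signed sum $2\cdot\mathbf 1_{\{\cdot>a_0\sin\varphi\}}-\mathbf 1_{\{\cdot>a_1\sin\varphi\}}-\mathbf 1_{\{\cdot>a_2\sin\varphi\}}$, and obtain each $H_i$ by solving $a_i\sin\varphi+\Kz\cos\varphi=\sigma$ in amplitude--phase form and clipping to $[0,\pi/2]$. Your anticipation that the limit bookkeeping is the only delicate part matches the paper, which handles it via the identity \eqref{eq:this14} and the graphical depiction in Fig.~\ref{fig:integrationLimits}.
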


\begin{proof}
Please refer to Appendix~\ref{sec:appPolar}. 
\end{proof}

Figure~\ref{fig:dampingDensity3D} shows the omnidirectional damping density for the example room. Also, the seven special points, where the curve is non-differentiable, are visible: 
\begin{equation}
\begin{aligned}
\Kx = -0.057, \Ky = -0.115, \Kz &= -0.268, \\ -\sqrt{\Kx^2 + \Ky^2} &= -0.128, \\ -\sqrt{\Kx^2 + \Kz^2} &= -0.274, \\ -\sqrt{\Ky^2 + \Kz^2} &= -0.292, \\ -\sqrt{\Kx^2 + \Ky^2 + \Kz^2} &= -0.297 \,.
\label{eq:specialPoints}	
\end{aligned}
\end{equation}
Thus, the damping density is distributed between $-\sqrt{\Kx^2 + \Ky^2 + \Kz^2}$ and $ \max(\Kx,\Ky,\Kz)$. This concludes the derivation of the damping density. In the following sections, we validate the derived expression and provide application examples.

%Finally, the power response is computed by numerical integration of the omnidirectional damping density in Sec.~\ref{subsec:omni}.

\begin{figure}[!t]
  \includegraphics[]{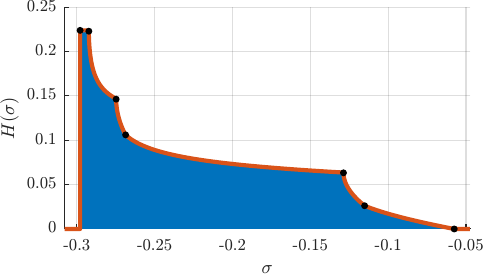}
  \caption{The omnidirectional damping density for the example room. The blue bars are the histogram computed from \eqref{eq:directionalToOmniDensity}. The red line is the closed-form expression from \eqref{eq:dampingDensityCloseForm}. The black dots indicate the seven special points; see \eqref{eq:specialPoints}.}
  \label{fig:dampingDensity3D}
\end{figure}

\begin{figure}[!t]
  \includegraphics[]{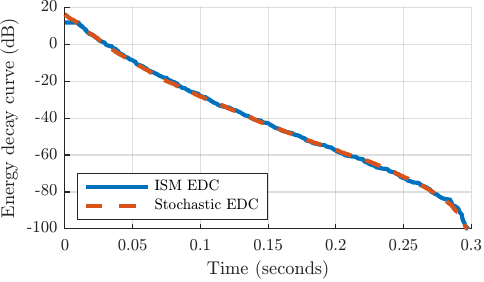}
  \caption{Energy decay curves of the example room with derived power response, \eqref{eq:numericalPowerResponse}, compared to the  ISM, \eqref{eq:ISM}.}%, and the proposed stochastic RIR \eqref{eq:stochasticRIR}.}
  \label{fig:EDCs} 
\end{figure}

% \subsection{Average Reverberation Time}

% The distribution of the damping density $H(\sigma)$ indicates that shoebox rooms usually exhibit a multi-exponential energy decay. Nonetheless, it is frequently useful to characterize the reverberation time as the average damping of a space \cite{Kuttruff:2009vl}. Similar to \eqref{eq:numericalPowerResponse}, we can compute the average damping numerically, i.e.,
% \begin{equation}
% \begin{aligned}
% \mean{\sigma} = \frac{\sum_i \sigma_i H(\sigma_i)}{\sum_i H(\sigma_i)}.
% \label{eq:numericalReverberationTime}	
% \end{aligned}
% \end{equation}
% The resulting reverberation time is then
% \begin{equation}
% \begin{aligned}
% \RT = \frac{3 \ln(10)}{c \mean{\sigma}}.
% \label{eq:reverberationTime}	
% \end{aligned}
% \end{equation}
% While this expression is not as simple as classic formulas by Sabine or Eyring, it can still be evaluated with few computations. In the following, we compare the energy decay curves and reverberation times resulting from the ISM with existing methods.

% \begin{equation}
% \begin{aligned}
% \ln(10^{-3}) = \frac{\textrm{EDC}(t)}{\textrm{EDC}(0)} \\
% \frac{\int_t^{\infty} \hat{h}_P(\tau)\dd{\tau}}{\int_0^{\infty} \hat{h}_P(\tau)\dd{\tau}}
% \label{eq:this3}	
% \end{aligned}
% \end{equation}

\section{Experimental Validation}\label{sec:results}

In the following, we validate the proposed damping density by comparing it to the results of ISM and Lehmann's method. We evaluate the accuracy of the tested approaches using RT differences from the ground-truth ISM estimation in typical and extreme scenarios.

\subsection{Shaped Decaying Noise}
\label{subsec:omni}
We use the power response resulting from the damping density to synthesize an RIR by shaping noise. The power response is computed by solving the integral in \eqref{eq:dampingDensity}. We did not find an analytical expression and performed instead a numerical integration. The damping density varies slowly, so even coarse sampling of the damping coefficients gives accurate results.

For simplicity, we choose a uniform set of damping coefficients $\sigma_i$ between $ -\sqrt{\Kx^2 + \Ky^2 + \Kz^2}$ and $\max(\Kx, \Ky, \Kz)$. Then, we evaluate the damping density $H(\sigma_i)$ for this set using \eqref{eq:dampingDensityCloseForm}.
Finally, the power response can be computed then by a weighted sum (replacing the integration in \eqref{eq:dampingDensity}) as follows
\begin{equation}
\begin{aligned}
    \powerResponse(\varrho) = \sum_i H(\sigma_i) e^{\sigma_i \varrho},
\label{eq:numericalPowerResponse}	
\end{aligned}
\end{equation}
for target distance or time range $\varrho = c t$. The sum in \eqref{eq:numericalPowerResponse} corresponds to the real axis of the discrete Laplace transform for which fast algorithms are available \cite{Loh:2023}.   

The power response can now be used to compute an approximation of the RIR.
Stochastic late reverberation synthesis is performed by applying the power response envelope to stationary noise, i.e.,

\begin{equation}
\begin{aligned}
h_{\textrm{stochastic}}(t) = u(t) \sqrt{\powerResponse(c t)},
\label{eq:stochasticRIR}	
\end{aligned}
\end{equation}
where $u(t)$ is a normalized Gaussian white noise.

The EDCs of the proposed stochastic RIR and ISM for the example room are compared in Fig.~\ref{fig:EDCs}. The excellent match confirms that the proposed stochastic response models the late part of ISM RIR accurately.

\begin{figure}[!t]
  \includegraphics[]{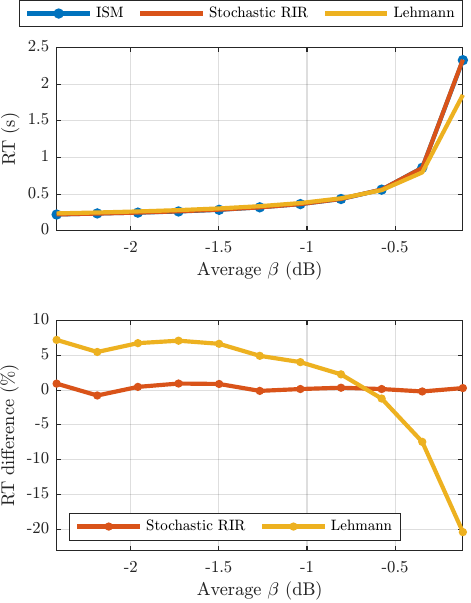}
  \caption{Comparison of RTs obtained with the three methods over increasing reflection coefficient values. (Top) obtained RT values and (bottom) the relative difference between ISM and the proposed and Lehmann's method.}
  \label{fig:RTplots}
\end{figure}

\subsection{Energy Decay Validation}
We further evaluate the EDCs synthesized with the proposed method against the ground-truth decay obtained with ISM. Additionally, we compare the accuracy of decay reproduction with Lehmann's method \cite{Lehmann:2008co, Lehmann:2007fi}. 

We synthesize a broad range of reflection coefficients to account for various decay rates. We start from $\rx, \rxx, \ry, \ryy, \rz, \rzz = [-0.161,~-0.180,~-0.025,~-0.181,~-0.125,~-0.018] (\si{\decibel})$ and increase the $\si{\decibel}$ value of reflection coefficients by a factor of $1,3,5,\dots,21$, changing the average $\r$ from $-0.115$ to $-2.305$~$\si{\decibel}$ in $11$ steps.

Unlike in the proposed method, the source and receiver positions influence the RIRs synthesized with ISM. This might create differences in EDCs that will give exaggeratedly different results. To minimize such an effect, we keep one set of reflection coefficients in each trial and the room geometry fixed at $L =[4,5,3]$\,m, while the source and receiver placement in ISM are determined by random sampling within the room. The positions are drawn from a uniform distribution over $3$ iterations. RIRs obtained in each trial are then averaged to obtain a single EDC.  

\begin{figure}[!t]
  \includegraphics[]{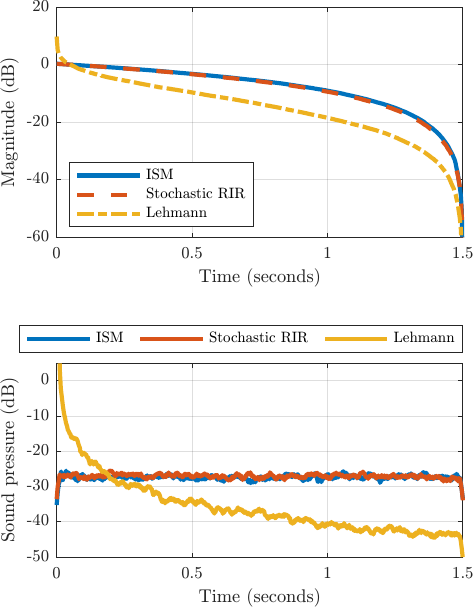}
  \caption{The ``lossless" case with rigid walls. Top: Energy decay curves. Bottom: Short-time average of sound pressure in dB with window lengths of 21~ms.}
  \label{fig:losslessEDC}
\end{figure}

Fig.~\ref{fig:RTplots} shows the evaluation results regarding the RT. The top pane presents the RT values for the three compared methods, while the bottom pane depicts the relative difference between ISM and either the proposed model or Lehmann's method. Lehmann's method displays a bias when the average reflection coefficient is below $-0.6$~dB, always showing RT values $5-10\%$ higher than the ISM. The bias is reversed for high reflection coefficients, where Lehmann's method underestimates the RT by 20\%. 
For our proposed method, the maximum relative error is 1.45\%, where the median absolute relative error is $0.56\%$.
Therefore, it is shown that the proposed stochastic approach outperforms Lehmann's method, resulting in lower error across the whole range of considered reflection coefficients. 

To reveal a weak point of Lehmann's method, we set all the reflection coefficients to $-0.0001$ (dB) to achieve a ``lossless" case with a very slow energy decay. The top pane of Fig.~\ref{fig:losslessEDC} shows the resulting EDCs for the three evaluated methods, while the bottom pane of Fig.~\ref{fig:losslessEDC} illustrates the respective sound pressure plots. The ISM and the proposed stochastic method are in excellent agreement and behave identically, displaying a slowly decaying EDC and a flat sound pressure level over time. Lehmann's method, however, reveals an EDC with a more pronounced decay while the sound pressure level drops by around $15$\,dB over $1.5$\,s.
% This behavior was investigated by comparing the EDC slope between stochastic and Lehmann's methods over high reflection coefficient values. The results depicted in Fig.~\ref{fig:EDCcomparison} show that a significant mismatch is achieved when the average reflection coefficient in a room is around 0.99. 
Thus, the proposed method clearly improves reverberation estimation for rooms with very low absorption. 

% \begin{figure}[!t]
%   \includegraphics[trim=0cm 0cm 0cm 0cm,width=\columnwidth]{Figures/averagePositions.pdf}
%   \caption{Energy decay curves averaged over 100 random source and receiver positions with room geometry and absorption fixed.}
%   \label{fig:average_EDCs}
% \end{figure}

\section{Applications of Damping Density}
\label{sec:application}
In this section, we showcase different applications of damping density to illustrate its practical value as an intermediate parameterization of stochastic reverberation. We present examples of multi-slopedness, RT prediction, and fast RIR synthesis. 

\subsection{Multi-Slopeness of Energy Decay}

In his classic paper, Kuttruff hypothesized possible shapes of the damping density to analyze the shape of the energy decay \cite[Fig.~1]{Kuttruff:1958uq}. A damping density with a single peak leads to a single exponential decay, which is the ubiquitous assumption made in reverberation time estimation. A damping density with two peaks is referred to as double-sloped and often occurs in coupled rooms \cite{Luizard.2014} and reverberation chambers \cite[see Fig.~1(A)]{Balint:2019.Bayesian}. A recently proposed common-slope model suggests that in complex geometry, source-receiver positions only impact the peak heights but not the peak position \cite[]{Goetz:2023.CommonSlope}. Fig.~\ref{fig:dampingDensity3D} illustrates the more general shape of the damping density, which is reminiscent of curves hypothesized by Kuttruff \cite[see curves $b$ and $c$ in Figure 3.12]{Kuttruff:2009vl}.

The analytical damping density resulting from ISM has an inherently more complicated distribution. The shape of the damping density is determined by three values: $\Kx$, $\Ky$, and $\Kz$ in \eqref{eq:KxKyKz} and an overall scaling factor. The seven special points in \eqref{eq:specialPoints}  determine the support of the damping density. Thus, even for rooms of the same material and room dimensions, i.e., $\Kx = \Ky = \Kz$, the damping density is distributed between $\sqrt{3} \Kx$ and $\Kx$. This suggests that the ISM of a shoebox room cannot result in a single exponential decay. 

\begin{figure}[!t]
  \includegraphics[]{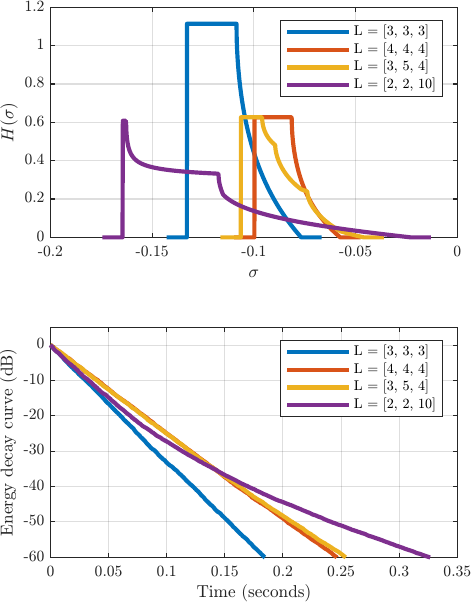}
  \caption{(Top) Damping density for different room sizes with equal reflection coefficients $\r_i = -1$~dB for all walls. (Bottom) Corresponding EDCs.}
  \label{fig:DensityVariation}
\end{figure}

The top pane of Fig.~\ref{fig:DensityVariation} shows the damping density using \eqref{eq:dampingDensityCloseForm} for different room sizes when the reflection coefficients are equal, whilst the bottom pane depicts the corresponding EDCs. The cubic room $L = [3,3,3]$ (blue line) has the most compact support and regular shape, leading to an almost linear EDC. The larger cubic room $L = [4,4,4]$ (red line) has the same shape but overall less damping and lower energy, suggesting a longer overall decay rate than the smaller cubic room. The example room dimension $L = [3, 5, 4]$ (yellow line) has wider support but similar average damping to the larger cubic room. The decay rates for both rooms are alike, but the more irregular shape of the damping density for the room with $L = [3, 5, 4]$ results in a more non-linear behavior of the corresponding EDC after 50~dB of decay. The support gets wider for a corridor-shaped room with $L = [2,2,10]$ (purple line), indicating a larger variety of decay rates than in the remaining rooms. The respective EDC deviates the most from the linear decay.

\subsection{Comparison to Classic Reverberation Time Formulas}

There are several classic formulas to predict the reverberation time from the geometry and reflection coefficients, e.g.,
\begin{equation}
\begin{aligned}
T_{60, \mathrm{Sab}} &=\frac{0.161 \cdot V}{\sum_{i=1}^6 S_i \alpha_i}, \\
T_{60, \mathrm{Eyr}} &=\frac{0.161 \cdot V}{-S \cdot \log \left(1-\sum_{i=1}^6 S_i \alpha_i / S\right)}, \\
% T_{60, \mathrm{Mil}} &=\frac{0.161 \cdot V}{-\sum_{i=1}^6 S_i \cdot \log \left(1-\alpha_i\right)}, \\
T_{60, \mathrm{Fit}} &=\frac{0.161 \cdot V}{S^2} \cdot\left(\frac{-2 \Ly \Lz}{\log \left(1-\left(\alpha_{x0}+\alpha_{x1}\right) / 2\right)}-\right. \\
&\left.\frac{2 \Lx \Lz}{\log \left(1-\left(\alpha_{y0}+\alpha_{y1}\right) / 2\right)}-\frac{2 \Lx \Ly}{\log \left(1-\left(\alpha_{z0}+\alpha_{z1}\right) / 2\right)}\right),
\label{eq:T60s}	
\end{aligned}
\end{equation}
where $\alpha_{kj} = 1 - \r_{kj}^2$ with $k = x,y,z$ and $j = 0,1$, are the absorption coefficients, $S$ is the total surface area of the enclosure, and $S_i, i  = 1, \ldots, 6 $, are the surface areas of each individual wall.

The proposed damping density can also be used to predict the reverberation time in shoebox rooms. While this expression is not as simple as classic formulas by Sabine or Eyring for RT prediction, it can still be evaluated with a few computations. Additionally, the closed form of $H(\sigma)$ in \eqref{eq:dampingDensity} potentially allows calibration of the parameters based on measurements \cite{Prawda:2022.Calibrating}.

As previously discussed, the energy decay is multi-sloped such that the definition of the RT is not unique. In particular, the measured dynamic range of the EDC impacts the predicted slope. For instance, $T_{20}$, which is assessed on a $20$-$\si{\decibel}$ evaluation range, may lead to lower RT values than $T_{60}$, which is based on the $60$-$\si{\decibel}$ range.

\begin{figure}[!t]
  \includegraphics[]{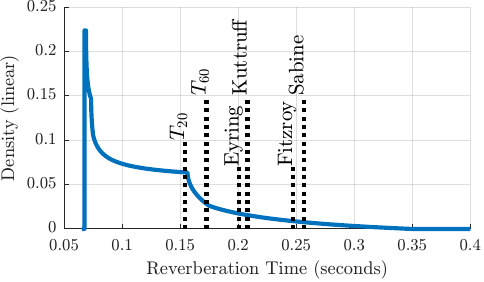}
  \caption{RT density of the example room and corresponding RT predictions. The $T_{20}$ and $T_{60}$ are derived from the EDC of the proposed method.}
  \label{fig:RTdensity}
\end{figure}

Fig.~\ref{fig:RTdensity} shows the RT density of the example room, indicating different RT predictions, including the proposed $T_{20}$ and $T_{60}$. The damping constant $\sigma$ is converted to RT values by
\begin{equation}
\begin{aligned}
T_{60,\sigma} = \frac{-60}{20\log_{10}(e^{\sigma c})} = \frac{-6.9078}{\sigma c}~.
\label{eq:sigmaRT}	
\end{aligned}
\end{equation}

\begin{figure}[!t]
  \includegraphics[]{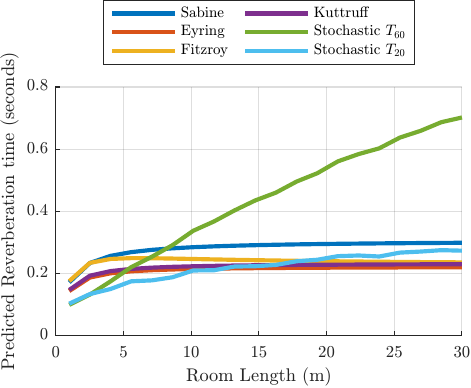}
  \caption{Predicted RT comparing classic formulas against prediction based on ISM. All parameters are the same as in the example room. Only the first room dimension (length) is altered.}
  \label{fig:RTCorridor}
\end{figure}

Fig.~\ref{fig:RTCorridor} shows the RT predictions when changing the room lengths. The parameters are like in the example room but with room size $L = [l,5,3] $ meters with $l = 1, \dots, 30$. The growing disproportion between the room dimensions leads to damping density shapes resembling that of the corridor-shaped room in Fig.~\ref{fig:DensityVariation}. The resulting multi-slopedness leads to non-linear EDC and  diverging estimates between $T_{20}$ and $T_{60}$. Fig.~\ref{fig:RTCorridor} shows that the classic formulas saturate for larger room sizes, while the proposed stochastic method increases more linearly.

\begin{figure}[!th]
  \includegraphics[]{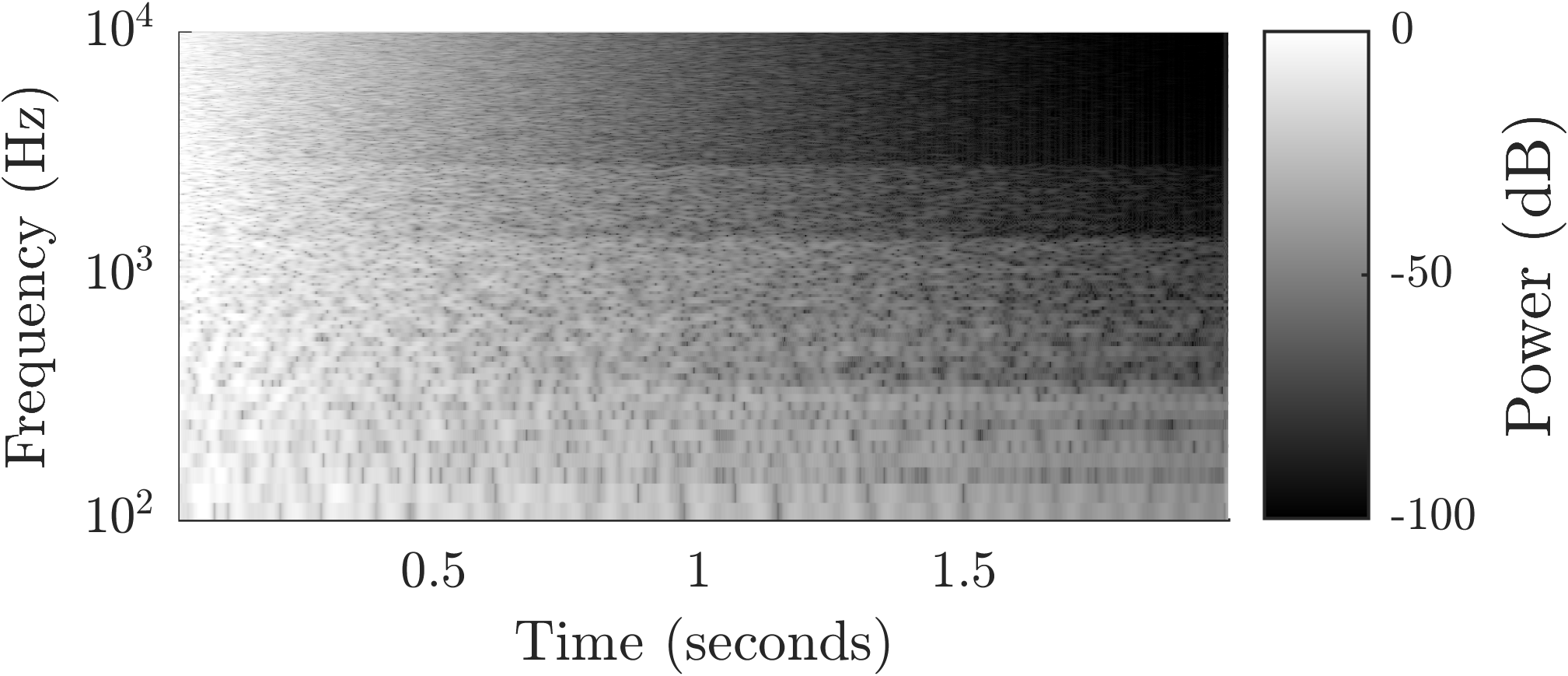}
  \includegraphics[]{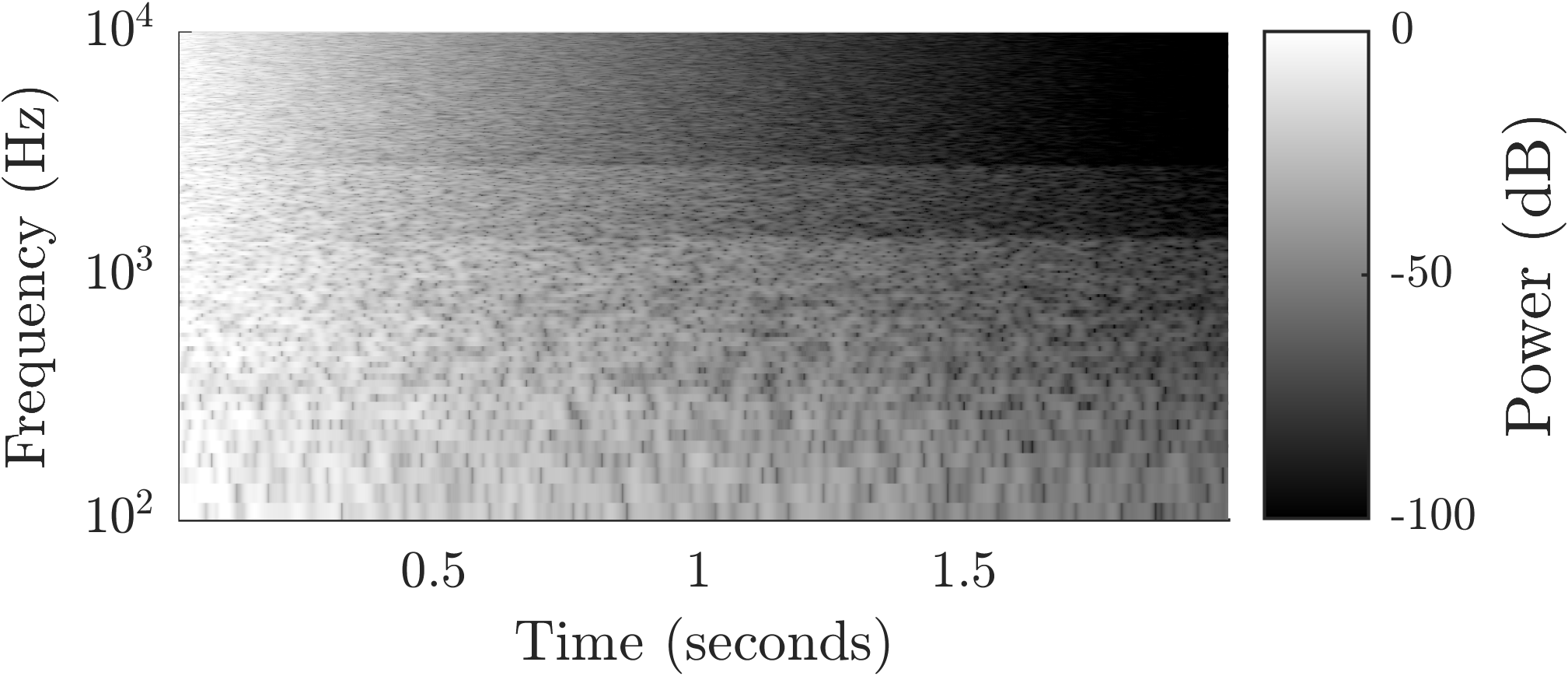}
  \caption{Time-frequency energy decay for frequency-dependent reflection coefficients. Top: ISM simulation. Bottom: Stochastic RIR.}
  \label{fig:spectrograms}
\end{figure}

This example illustrates that ISM and the classic formulas make fundamentally different assumptions. While the classic formulas are based on diffuse reflections, the ISM is strictly specular. Which prediction applies more to a given room depends strongly on these model assumptions. Our recent study suggests that the classic formulas can be calibrated so that even a shoebox room conforms to Sabine's and Eyring's predictions \cite{Prawda:2022.Calibrating}.

% \begin{figure}[!t]
%   \includegraphics[trim=0cm 0cm 0cm 0cm,width=\columnwidth]{}
%   \caption{Room impulse responses of the example room with ISM, \eqref{eq:ISM}, the proposed stochastic RIR, \eqref{eq:stochasticRIR}, and their combination at the transition time of 0.02~seconds.}
%   \label{fig:RIRs}
% \end{figure}

\subsection{Fast RIR Synthesis}
The late part of the reverberation can often be replaced with a stochastic approximation without impairing the perceptual quality. Based on some predetermined transition time, the early part from ISM is combined with the shaped noise. This procedure readily applies to frequency-dependent reflection coefficients, leading to frequency-dependent damping density. The shaped noise synthesis can then be applied per frequency band.

% The proposed method can be readily generalized to frequency-dependent reflection coefficients. 

We show here an example of reflection coefficients specified per octave band. For each octave band in the range of 125\,Hz to 4\,kHz, we derive the reflection coefficients based on the target RT$= [2, 1.6, 1.4, 1.2, 1, 0.8]$s and the dimension of the example room, $L = [4,5,3]$m. The damping density is computed for the respective reflection coefficients in each octave band. The resulting envelope shapes a broadband Gaussian noise and is filtered through an octave bandpass filter. We obtain the final stochastic RIR  by summing the bandpassed signals. 

Figure~\ref{fig:spectrograms} shows spectrograms of the time-frequency energy decay resulting from the ISM method and the proposed stochastic RIR. In contrast to the earlier method \cite{Lehmann:2010kh}, no additional level matching is necessary as all scaling factors are already applied. In the low frequency of the ISM, some modal resonances are visible, which are not reproduced by the stochastic RIR synthesis based on shaped noise. The low-frequency similarity can be improved with a modal reverberator derived from the ISM  \cite{schaefer2023}.  The corresponding sound examples are provided online \cite{code}.

\section{Conclusion}
\label{sec:conclusion}

In this paper, we proposed a method to describe the late part of the room impulse response using a stochastic approach. We derived the omnidirectional damping density of an absorptive shoebox room compatible with the energy decay resulting from the ISM. The spherical integration of the directional damping density leads to a characteristic distribution showing that ISM consistently exhibits a multi-slope decay. The overall energy is derived rigorously, which allows synthesizing a stochastic RIR without energy matching to be easily combined with an early part based on the ISM. 

The evaluation shows that the synthesized reverberation is more accurately matched with the ground-truth ISM than another established method. The proposed model allows for accurate modeling of multi-slope decays and reverberation time prediction under the assumption of a non-diffuse soundfield. Future work may include the weighting of directional sources and receivers. The damping density can be also derived for more complex room geometries and can serve as a useful  representation of energy decay.

\section{Author Declarations}
The authors have no conflicts of interest to disclose.

\section{Data Availability}
The data and code that support the findings of this study are openly available in a GitHub repository \cite{code}.

%%%%%%%%%%%%%%%%%%%%%%%%%%%%%%%%%%%%%%%%%%%%%%%%%%%%%%%%
%%%% Appendix- Integration over azimuth angle 
%%%%%%%%%%%%%%%%%%%%%%%%%%%%%%%%%%%%%%%%%%%%%%%%%%%%%%%%

\appendix
\section{Derivation of density $H_\varphi$}\label{sec:appAzimuth}

To obtain the damping density $H_\varphi(\sigma)$ we start with \eqref{eq:2DdirectionalToOmniDensity}, i.e., 
\begin{equation}
    H_\varphi(\sigma) = \frac{8}{4 \pi V} \int_0^{\pi / 2}  \delta(\sigma - M_{\varphi}(\vartheta))  \mathrm{d} \vartheta, 
    \label{eq:appHphi}
\end{equation}
where the function $M_\varphi$ is defined in \eqref{eq:2D}. Next, we want to exploit the following property of the Dirac delta function~\cite{Rabenstein_Schaefer_MDSS_23}
\begin{equation}
\begin{aligned}
\delta(f(\vartheta)) = \sum_i \frac{1}{\abs{ \pdv{}{\vartheta} f(\vartheta_i)}} \delta(\vartheta - \vartheta_i),
\label{eq:appDelta1}	
\end{aligned}
\end{equation}
where $\vartheta_i$ is a simple root of the function $f(\vartheta)$. Therefore, inspecting \eqref{eq:appHphi} we define the the function $f_\sigma(\vartheta)$ and its derivative as follows
\begin{equation}
\begin{aligned}
f_\sigma(\vartheta) &= \sigma - M_{\varphi}(\vartheta) \\
&=  \sigma + \sqrt{\KxSin^2 + \KySin^2} \cos(\vartheta + \phi) - \KzCos, \\
\pdv{}{\vartheta} f_\sigma(\vartheta) &= -\sqrt{\KxSin^2 + \KySin^2} \sin(\vartheta + \phi) .
\label{eq:appFsig}	
\end{aligned}
\end{equation}
Using \eqref{eq:trigRoots}, the two roots $\vartheta_i$ of $f_\sigma(\vartheta)$ can be obtained as 
\begin{equation}
\begin{aligned}
    \vartheta_{1,2} = \pm \arccos(\frac{\sigma - \KzCos}{-\sqrt{\KxSin^2 + \KySin^2}}) - \phi.
\label{eq:appRootFsig}	
\end{aligned}
\end{equation}

\noindent By inserting the roots $\vartheta_{1,2}$ from \eqref{eq:appRootFsig} into the derivative of $f_\sigma$ in \eqref{eq:appFsig} yields 
%%%%%%%%%%%%%%%%%%%%%%%%%%%%%%%%%%%%%%%%%%%%%%%%%%%%%%%%%%%%
% \begin{equation}
% \begin{aligned}
% \abs{\pdv{}{\vartheta} f_\sigma(\vartheta_{1,2})} 
% &= -\sqrt{\KxSin^2 + \KySin^2} \sin(\pm \arccos{\frac{\sigma - \KzCos}{-\sqrt{\KxSin^2 + \KySin^2}}}) \\ 
% &= -\sqrt{\KxSin^2 + \KySin^2} \sqrt{1 -  \frac{(\sigma - \KzCos)^2}{\KxSin^2 + \KySin^2}} \\ 
% &= \sqrt{\KxSin^2 + \KySin^2 - (\sigma -  \KzCos)^2}.
% \label{eq:appHphiTerm1}	
% \end{aligned}
% \end{equation}
%%%%%%%%%%%%%%%%%%%%%%%%%%%%%%%%%%%%%%%%%%%%%%%%%%%%%%%%%%%%
\begin{equation}
\begin{aligned}
\pdv{}{\vartheta} f_\sigma(\vartheta_{1,2})
&= -\sqrt{\KxSin^2 + \KySin^2} \sin(\pm\arccos{\frac{\sigma - \KzCos}{-\sqrt{\KxSin^2 + \KySin^2}}}) \\ 
&= \mp\sqrt{\KxSin^2 + \KySin^2} \sqrt{1 - \frac{(\sigma - \KzCos)^2}{\KxSin^2 + \KySin^2}} , \\ 
\abs{\pdv{}{\vartheta} f_\sigma(\vartheta_{1,2})} 
&= \sqrt{\KxSin^2 + \KySin^2 - (\sigma - \KzCos)^2}.
\label{eq:appHphiTerm1}	
\end{aligned}
\end{equation}
%%%%%%%%%%%%%%%%%%%%%%%%%%%%%%%%%%%%%%%%%%%%%%%%%%%%%%%%%%%%
In particular, the factor is equal for both roots, i.e., 
\begin{equation}
\begin{aligned}
\abs{\pdv{}{\vartheta} f_\sigma(\vartheta_1)} = \abs{\pdv{}{\vartheta} f_\sigma(\vartheta_2)}.
\label{eq:app2DequalDerivates}	
\end{aligned}
\end{equation}
Exploiting the properties \eqref{eq:appDelta1}--\eqref{eq:app2DequalDerivates}, the expression for the density $H_\varphi$ in \eqref{eq:appHphi} can be simplified as follows
\begin{equation}
\begin{aligned}
H_{\varphi}(\sigma) = \frac{8}{4 \pi V}  \sum_{i = 1}^2 \frac{1}{\abs{\frac{\partial}{\partial \vartheta} f_\sigma(\vartheta_i)}} \int_0^{\pi / 2} \delta(\vartheta - \vartheta_i) \partial \vartheta \\
= \frac{8}{4 \pi V}  \frac{1}{\abs{\frac{\partial}{\partial \vartheta} f_\sigma(\vartheta_1)}} \sum_{i = 1}^2 \int_0^{\pi / 2} \delta(\vartheta - \vartheta_i) \partial \vartheta.
\label{eq:appHphiSimp}	
\end{aligned}
\end{equation}

In order to find a closed form expression for $H_\varphi$, we solve the integral in \eqref{eq:appHphiSimp}. We use the following property of the Delta impulse for the integral
\begin{equation}
\begin{aligned}
\int_0^{\pi / 2} \delta(\vartheta - \vartheta_i) \mathrm{d} \vartheta = \1{0,\pi/2}(\vartheta_i).
\label{eq:appHphiTerm2-delta}	
\end{aligned}
\end{equation}
with the indicator function
\begin{equation}
\begin{aligned}
\1{a,b}(x) = \begin{cases}
1 \quad \textrm{ if } a \leq x \leq b \\
0 \quad \textrm{ otherwise .}
\end{cases}.
\label{eq:indicator_funct}	
\end{aligned}
\end{equation}
 %%%%%%%%%%%%%%%%%%%%%%%%%%%%%%%%%%%%%%%%%%%%%%%%%%%%%%
% Due to $-\pi/2 \leq \phi \leq 0$, we have
% \begin{equation}
% \begin{aligned}
% 0 &\leq \vartheta_i \leq \pi/2 \\
% \phi &\leq \pm \arccos(\frac{\sigma - \KzCos}{-\sqrt{\KxSin^2 + \KySin^2}}) \leq \phi + \pi/2 \\
% \label{eq:this5}	
% \end{aligned}
% \end{equation}
% As $\sin(x)$ is monotonic in $[-\pi/2, \pi/2]$, we can apply it to the inequality
 %%%%%%%%%%%%%%%%%%%%%%%%%%%%%%%%%%%%%%%%%%%%%%%%%%%%%%
The condition $0 \leq \vartheta_i \leq \pi/2$ is equivalent to (see \eqref{eq:appRootFsig})
\begin{equation}
\begin{aligned}
 \phi &\leq \pm \arccos(\frac{\sigma - \KzCos}{-\sqrt{\KxSin^2 + \KySin^2}}) \leq \phi + \pi/2  .
 \label{eq:arccos-range}	
\end{aligned}
\end{equation}
Since $-\pi/2 \leq \phi \leq 0$ (see~\eqref{eq:alphaBetaGamma}), the condition \eqref{eq:arccos-range} has to be checked in the range 
$[-\pi/2, \pi/2]$. The $\arccos$-function is removed by applying the function $\sin \phi$ which is monotonically increasing in $[-\pi/2, \pi/2]$
 %%%%%%%%%%%%%%%%%%%%%%%%%%%%%%%%%%%%%%%%%%%%%%%%%%%%%%
\begin{equation}
\begin{aligned}
\sin \phi &\leq \sin(\pm \arccos(\frac{\sigma - \KzCos}{-\sqrt{\KxSin^2 + \KySin^2}})) \leq \cos \phi .\\
\label{eq:this6}	
\end{aligned}
\end{equation}
We substitute $\phi$ with \eqref{eq:alphaBetaGamma} and apply the identities \eqref{eq:sinatan}, \eqref{eq:cosatan}, \eqref{eq:sinacos}, such that
\begin{equation}
\begin{aligned}
% \frac{\KxSin}{\sqrt{\KxSin^2 + \KySin^2}} &\leq \pm \sqrt{1 - \frac{(\sigma - \KzCos)^2}{\KxSin^2 + \KySin^2})} \leq \frac{-\KySin}{\sqrt{\KxSin^2 + \KySin^2}} \\
% \KxSin &\leq \pm \sqrt{\KxSin^2 + \KySin^2 - (\sigma - \KzCos)^2} \leq -\KySin \\
% SEBASTIAN's new version, 2023-09-25
\frac{\KySin}{\sqrt{\KxSin^2 + \KySin^2}} &\leq \pm \sqrt{1 - \frac{(\sigma - \KzCos)^2}{\KxSin^2 + \KySin^2}} \leq \frac{-\KxSin}{\sqrt{\KxSin^2 + \KySin^2}} , \\
\KySin &\leq \pm \sqrt{\KxSin^2 + \KySin^2 - (\sigma - \KzCos)^2} \leq -\KxSin .\\
\label{eq:this7}	
\end{aligned}
\end{equation}
By inspecting the two cases ($\pm$), and for $\sigma < 0$ we see that 
\begin{equation}
\begin{aligned}
% SEBASTIAN's new version, 2023-09-25
0 \leq \vartheta_1 \leq \pi/2 \textrm{ if } -\sqrt{\KxSin^2 + \KySin^2} \leq \sigma - \KzCos \leq \KySin, \\  
0 \leq \vartheta_2 \leq \pi/2 \textrm{ if } -\sqrt{\KxSin^2 + \KySin^2} \leq \sigma - \KzCos \leq \KxSin. 
\label{eq:indicatorBounds}	
\end{aligned}
\end{equation}
With these bounds, the integral in \eqref{eq:appHphiSimp} can be expressed by the following indicator function 
\begin{equation}
\begin{aligned}
\mu_\varphi(\sigma- \KzCos) &= \sum_{i=1}^2 \int_0^{\pi / 2} \delta(\vartheta - \vartheta_i) \mathrm{d} \vartheta\\ 
&= \1{-\sqrt{\KxSin^2 + \KySin^2},\KySin}(\sigma- \KzCos) + \1{-\sqrt{\KxSin^2 + \KySin^2},\KxSin}(\sigma- \KzCos) .
\label{eq:indicator}	
\end{aligned}
\end{equation}
Finally, inserting the results for both terms \eqref{eq:appHphiTerm1} and \eqref{eq:indicator} into \eqref{eq:appHphiSimp}, leads to the closed form expression for $H_\varphi(\vartheta)$ in \eqref{eq:2DdampingDensity}. This concludes the proof.

\begin{figure}[!t]
  \includegraphics[]{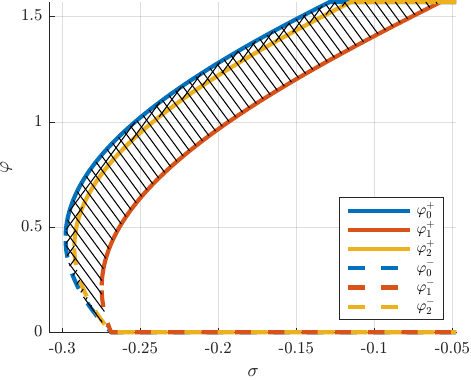}
  \caption{Integration limits for the example room. The cross-hatched areas belong to $H_0(\sigma) - H_1(\sigma)$ and $H_0(\sigma) - H_2(\sigma)$, respectively. The dashed lines are the lower limit, and the solid lines are the upper limit.  }
  \label{fig:integrationLimits}
\end{figure}

\section{Derivation of Density $H$}
\label{sec:appPolar}
We expand the damping density using Lemma~\ref{th:horizontal}
\begin{equation}
\begin{aligned}
&H(\sigma) = \int_0^{\pi / 2} H_{\varphi}(\sigma) \sin(\varphi) \mathrm{d} \varphi \\
&= \frac{8}{4 \pi V} \int_0^{\pi / 2} 
\frac{\sin(\varphi) \mu_{\varphi}(\sigma - \KzCos)}{\sqrt{ \KxSin^2 + \KySin^2 - (\sigma - \KzCos)^2}}  \dd{\varphi}.
\label{eq:3DdirectionalToOmniDensity}
\end{aligned}
\end{equation}

The indicator function $\1{a,b}(x)$ can be expressed as integration limits $x_0 = \max(a,0)$ and $x_1 = \min(b,\pi/2)$, so we first solve the integral with general limits $\varphi_0$ and $\varphi_1$:
\begin{equation}
\begin{aligned}
\int_{\varphi_0}^{\varphi_1} \frac{\sin(\varphi)}{\sqrt{\sin(\varphi)^2 (\Kx^2 + \Ky^2) - (\sigma - \Kz \cos(\varphi))^2}}  \dd{\varphi} = I.
\label{eq:indefiniteIntegral}	
\end{aligned}
\end{equation}
Substituting with $u = \cos \varphi$ and $\dd{u} = - \sin \varphi \dd{\varphi}$, we have
\begin{equation}
\begin{aligned}
\int_{\varphi_0}^{\varphi_1} \sin(\varphi)  \dd{\varphi} = - \int_{\cos \varphi_0}^{\cos \varphi_1}  \dd{u} = \int_{\cos \varphi_1}^{\cos \varphi_0} \dd{u}.
\label{eq:this8}	
\end{aligned}
\end{equation}
We use
\begin{equation}
\begin{aligned}
\KxSin^2 + \KySin^2 - (\sigma - \KzCos)^2 = A \cos^2 \varphi + B \cos \varphi + C,
\label{eq:this9}	
\end{aligned}
\end{equation}
where
\begin{equation}
\begin{aligned}
A &= - \Kx^2 - \Ky^2 - \Kz^2 ,\\
B &= 2 \sigma \Kz ,\\
C &= \Kx^2 + \Ky^2 - \sigma^2 ,\\
\Delta &= \sqrt{B^2 - 4AC} .
\label{eq:this10}	
\end{aligned}
\end{equation}
Thus,
\begin{equation}
\begin{aligned}
I &= \int_{\cos \varphi_1}^{\cos \varphi_0} \frac{\dd{u}}{\sqrt{A u^2 + B u + C}} \\
&= - \frac{1}{\sqrt{-A}} \arcsin(\frac{2A u + B}{\Delta}) \at_{\cos \varphi_1}^{\cos \varphi_0} \\
&= F(\sigma,u) \at_{\cos \varphi_1}^{\cos \varphi_0} .
\label{eq:this11}	
\end{aligned}
\end{equation}
The evaluation limits are the limits of the indicator functions in \eqref{eq:indicator}, i.e.,
\begin{equation}
\begin{aligned}
\sigma - \KzCos &= -\sqrt{\KxSin^2 + \KySin^2} &\rightarrow u_0 = \cos \varphi_0 ,\\
\sigma - \KzCos &= \KySin &\rightarrow u_1 = \cos \varphi_1 ,\\
\sigma - \KzCos &= \KxSin &\rightarrow u_2 = \cos \varphi_2 .
\label{eq:indicator_limits}	
\end{aligned}
\end{equation}
The limits in~\eqref{eq:indicator_limits} are of the generic form
\begin{equation}
\begin{aligned}
a_i \sin(\varphi) + b \cos(\varphi) - c = 0
\label{eq:this12}	
\end{aligned}
\end{equation}
with
\begin{equation}
\begin{aligned}
a_0 &= -\sqrt{\Kx^2 + \Ky^2}, \,
a_1 = \Kx, \, 
a_2 = \Ky, \\
b &= \Kz, \quad
c = \sigma .
\label{eq:this13}	
\end{aligned}
\end{equation}
Using \eqref{eq:trigRoots}, we have with $b \leq 0$
\begin{equation}
\begin{aligned}
% u_i = \cos \varphi_i = \frac{1}{a_i^2 + b^2} \p*{- c b \mp a_i \sqrt{a_i^2 + b^2 - c^2}} \\
\varphi_{i}^\pm = \pm \arccos(-\frac{c}{\sqrt{a_i^2 + b^2}}) - \arctan(- a_i / b) .
\label{eq:indicator_phi}	
\end{aligned}
\end{equation}
The limits need to be between $[0, \pi/2]$. The indicator function can also result in two distinct intervals. To avoid handling multiple cases, we express
\begin{equation}
\begin{aligned}
\ind_{[a,b] \cap [0,\pi/2]}(x) = \ind_{[b_1,b_2] \cap [0,\pi/2]} - \ind_{[a_1,a_2] \cap [0,\pi/2]} .
\label{eq:this14}	
\end{aligned}
\end{equation}
Thus, the evaluation limits are (with $\varphi_{i}^\pm $ from~\eqref{eq:indicator_phi})
\begin{equation}
\begin{aligned}
\hat{\varphi}_i^+ = \max(\varphi_i^+,0), \quad
\hat{\varphi}_i^- = \min(\varphi_i^-,\pi/2),
\label{eq:this15}	
\end{aligned}
\end{equation}
The evaluation limits can be represented graphically in Fig.~\ref{fig:integrationLimits}.

We conclude the proof by writing the damping density as 
\begin{equation}
\begin{aligned}
H(\sigma) &= \frac{8}{4 \pi V} \p*{2 H_0(\sigma) - H_1(\sigma) - H_2(\sigma)} , \\
H_i(\sigma) &= F(\sigma,u) \at_{\cos \hat{\varphi}_i^+}^{\cos \hat{\varphi}_i^-} .
\label{eq:this16}	
\end{aligned}
\end{equation}

% Further, the integration bounds need to be clipped to the interval $[0, \pi/2]$, i.e., $\dbtilde{\varphi}_{1,2} = \min(\max(\dbtilde{\varphi}_{1,2}, 0),\pi/2)$. 

\section{Useful Identities}
We list several well-known identities to ease the derivations presented in this paper
\begin{align}
% \begin{aligned}
\sin(\arctan(x)) &= \frac{x}{\sqrt{1 + x^2}}, \label{eq:sinatan} \\
\cos(\arctan(x)) &= \frac{1}{\sqrt{1 + x^2}}, \label{eq:cosatan} \\
\sin(\pm \arccos(x)) &= \pm \sqrt{1 - x^2}, \label{eq:sinacos} \\
\arctan(x) &= - \arccos \frac{1}{\sqrt{1 + x^2}} \textrm{ for } x \leq 0,\\
a \sin(x) + b \cos(x) &= \sgn(b) \sqrt{a^2 + b^2} \cos(x + \arctan(- a/b)), \\
\arccos(x) \pm \arccos(y) &= \arccos(xy \mp \sqrt{(1 - x^2)(1 - y^2)}), \\
\arcsin(x) - \arcsin(y) &= \arcsin(x\sqrt{1 - y^2} - y\sqrt{1 - x^2}).
\label{eq:this17}	
% \end{aligned}
\end{align}

The roots of $a \sin(x) + b \cos(x) - c = 0$ for $\sgn(a) = \sgn(b)$ are 
\begin{equation}
\begin{aligned}
x_{1,2} &= \pm \arccos(\frac{c}{\sgn(b)\sqrt{a^2 + b^2}}) - \arctan(-a / b) \\
&=  \pm \arccos(\frac{c}{\sgn(b)\sqrt{a^2 + b^2}}) + \arccos(\frac{b}{\sqrt{a^2 + b^2}}) \\
&= \arccos(\frac{1}{a^2 + b^2} \p*{\frac{c b}{\sgn(b)} \mp a \sqrt{a^2 + b^2 - c^2}} ) .
\label{eq:trigRoots}	
\end{aligned}
\end{equation}
% The validity of both roots $\vartheta_i$ can be verified by inserting \eqref{eq:appRootFsig} into $f_\sigma(\vartheta)$ in \eqref{eq:appFsig}, i.e., 
% \begin{equation}
% \begin{aligned}
% f_\sigma(\vartheta_i) &= \sigma - \KzCos +\sqrt{\KxSin^2 + \KySin^2} \cos(\pm \arccos(\frac{\sigma - \KzCos}{-\sqrt{\KxSin^2 + \KySin^2}}))\\ 
% &= \sigma - \KzCos -\sqrt{\KxSin^2 + \KySin^2} \frac{\sigma - \KzCos}{\sqrt{\KxSin^2 + \KySin^2}} = 0. 
% \label{eq:this18}	
% \end{aligned}
% \end{equation}

% \bibliographystyle{IEEEbib}
\bibliography{My_Library, JASA_variable}

\begin{thebibliography}{21}
\def\enquote#1{``#1,''}
\def\plainquote#1{``#1''}
\expandafter\ifx\csname natexlab\endcsname\relax\def\natexlab#1{#1}\fi
\providecommand{\dourl}[1]{\href{http://#1}{\nolinkurl{#1}}}
\providecommand{\bibinfo}[2]{#2}
\providecommand{\noopsort}[1]{}
\providecommand{\switchargs}[2]{#2#1}
  \def\eatspace #1{#1}

\bibitem[{Aknin and Badeau(2021)}]{Aknin:2021.StochasticReverberationModel}
\bibinfo{author}{Aknin, A.},  and \bibinfo{author}{Badeau, R.}
  (\textbf{\bibinfo{year}{2021}}). \enquote{\bibinfo{title}{{Stochastic
  reverberation model with a frequency dependent attenuation}}} in
  \emph{\bibinfo{booktitle}{IEEE Workshop on Applications of Signal Processing
  to Audio and Acoustics (WASPAA)}}, \bibinfo{address}{New Paltz, NY, USA}.

\bibitem[{Aknin \emph{et~al.}(2020)Aknin, Dupré, and
  Badeau}]{Aknin:2020.StochasticImageSource}
\bibinfo{author}{Aknin, A.}, \bibinfo{author}{Dupré, T.},  and
  \bibinfo{author}{Badeau, R.} (\textbf{\bibinfo{year}{2020}}).
  \enquote{\bibinfo{title}{{Evaluation Of A Stochastic Reverberation Model
  Based On The Image Source Principle}}} in
  \emph{\bibinfo{booktitle}{Proceedings of the 23rd International Conference on
  Digital Audio Effects (DAFx2020)}}, \bibinfo{address}{Vienna, Austria}.

\bibitem[{Allen and Berkley(1979)}]{Allen:1979cn}
\bibinfo{author}{Allen, J.~B.},  and \bibinfo{author}{Berkley, D.~A.}
  (\textbf{\bibinfo{year}{1979}}). \enquote{\bibinfo{title}{{Image method for
  efficiently simulating small-room acoustics}}} \bibinfo{journal}{J. Acoust.
  Soc. Am.} \textbf{65}(4), \bibinfo{pages}{943 -- 950},
  \dodoi{10.1121/1.382599}.

\bibitem[{Badeau(2018)}]{Badeau:2018um}
\bibinfo{author}{Badeau, R.} (\textbf{\bibinfo{year}{2018}}).
  \enquote{\bibinfo{title}{{Unified Stochastic Reverberation Modeling}}}
  European Signal Processing Conference (EUSIPCO), pp. \bibinfo{pages}{1 -- 5},
  \dourl{https://hal.archives-ouvertes.fr/hal-01795319/document}.

\bibitem[{Badeau(2019)}]{Badeau:2019.FrameworkStochastic}
\bibinfo{author}{Badeau, R.} (\textbf{\bibinfo{year}{2019}}).
  \enquote{\bibinfo{title}{{Common mathematical framework for stochastic
  reverberation models}}} \bibinfo{journal}{The Journal of the Acoustical
  Society of America} \textbf{145}(4), \bibinfo{pages}{2733--2745},
  \dodoi{10.1121/1.5096153}.

\bibitem[{Balint \emph{et~al.}(2019)Balint, Muralter, Nolan, and
  Jeong}]{Balint:2019.Bayesian}
\bibinfo{author}{Balint, J.}, \bibinfo{author}{Muralter, F.},
  \bibinfo{author}{Nolan, M.},  and \bibinfo{author}{Jeong, C.-H.}
  (\textbf{\bibinfo{year}{2019}}). \enquote{\bibinfo{title}{{Bayesian decay
  time estimation in a reverberation chamber for absorption measurements}}}
  \bibinfo{journal}{The Journal of the Acoustical Society of America}
  \textbf{146}(3), \bibinfo{pages}{1641--1649}, \dodoi{10.1121/1.5125132}.

\bibitem[{Götz \emph{et~al.}(2023)Götz, Schlecht, and
  Pulkki}]{Goetz:2023.CommonSlope}
\bibinfo{author}{Götz, G.}, \bibinfo{author}{Schlecht, S.~J.},  and
  \bibinfo{author}{Pulkki, V.} (\textbf{\bibinfo{year}{2023}}).
  \enquote{\bibinfo{title}{{Common-slope modeling of late reverberation}}}
  \bibinfo{journal}{IEEE/ACM Transactions on Audio, Speech, and Language
  Processing} \textbf{PP}(99), \bibinfo{pages}{1--13},
  \dodoi{10.1109/taslp.2023.3317572}.

\bibitem[{Habets(2006)}]{Habets:2006.RIRgen}
\bibinfo{author}{Habets, E.} (\textbf{\bibinfo{year}{2006}}).
  \enquote{\bibinfo{title}{{Room impulse response generator}}}
  \bibinfo{type}{Technical Report}.

\bibitem[{Kuttruff(1958)}]{Kuttruff:1958uq}
\bibinfo{author}{Kuttruff, H.} (\textbf{\bibinfo{year}{1958}}).
  \enquote{\bibinfo{title}{{Eigenschaften und Auswertung von Nachhallkurven}}}
  \bibinfo{journal}{Acta Acustica united with Acustica} \textbf{8}(4),
  \bibinfo{pages}{273 -- 280},
  \dourl{http://www.ingentaconnect.com/content/dav/aaua/1958/00000008/A00104s1/art00006?crawler=true}.

\bibitem[{Kuttruff(2009)}]{Kuttruff:2009vl}
\bibinfo{author}{Kuttruff, H.} (\textbf{\bibinfo{year}{2009}}). CRC Press
  \emph{\bibinfo{title}{{Room Acoustics, Fifth Edition}}}
  (\bibinfo{publisher}{CRC Press}).

\bibitem[{Lehmann and Johansson(2008-07)}]{Lehmann:2008co}
\bibinfo{author}{Lehmann, E.~A.},  and \bibinfo{author}{Johansson, A.~M.}
  (\textbf{\bibinfo{year}{2008-07}}). \enquote{\bibinfo{title}{{Prediction of
  energy decay in room impulse responses simulated with an image-source
  model}}} \bibinfo{journal}{J. Acoust. Soc. Am.} \textbf{124}(1),
  \bibinfo{pages}{269 -- 277},
  \dourl{http://asa.scitation.org/doi/abs/10.1121/1.2936367?journalCode=jas},
  \dodoi{10.1121/1.2936367}.

\bibitem[{Lehmann and Johansson(2010-08)}]{Lehmann:2010kh}
\bibinfo{author}{Lehmann, E.~A.},  and \bibinfo{author}{Johansson, A.~M.}
  (\textbf{\bibinfo{year}{2010-08}}). \enquote{\bibinfo{title}{{Diffuse
  Reverberation Model for Efficient Image-Source Simulation of Room Impulse
  Responses}}} \bibinfo{journal}{IEEE/ACM Trans. Audio, Speech, Language
  Process.} \textbf{18}(6), \bibinfo{pages}{1429 -- 1439},
  \dourl{http://ieeexplore.ieee.org/document/5299028/},
  \dodoi{10.1109/tasl.2009.2035038}.

\bibitem[{Lehmann \emph{et~al.}(2007)Lehmann, Johansson, and
  Nordholm}]{Lehmann:2007fi}
\bibinfo{author}{Lehmann, E.~A.}, \bibinfo{author}{Johansson, A.~M.},  and
  \bibinfo{author}{Nordholm, S.} (\textbf{\bibinfo{year}{2007}}).
  \enquote{\bibinfo{title}{{Reverberation-Time Prediction Method for Room
  Impulse Responses Simulated with the Image-Source Model}}} Proc. IEEE
  Workshop Applicat. Signal Process. Audio Acoust. (WASPAA), pp.
  \bibinfo{pages}{159 -- 162},
  \dourl{http://ieeexplore.ieee.org/document/4392980/},
  \dodoi{10.1109/aspaa.2007.4392980}.

\bibitem[{Loh(2023)}]{Loh:2023}
\bibinfo{author}{Loh, Y.~L.} (\textbf{\bibinfo{year}{2023}}).
  \enquote{\bibinfo{title}{{Fast discrete Laplace transforms}}}
  \bibinfo{journal}{Journal of Computational Mathematics and Data Science}
  \textbf{8}, \bibinfo{pages}{100082}, \dodoi{10.1016/j.jcmds.2023.100082}.

\bibitem[{Luizard \emph{et~al.}(2014)Luizard, Polack, and Katz}]{Luizard.2014}
\bibinfo{author}{Luizard, P.}, \bibinfo{author}{Polack, J.-D.},  and
  \bibinfo{author}{Katz, B. F.~G.} (\textbf{\bibinfo{year}{2014}}).
  \enquote{\bibinfo{title}{{Sound energy decay in coupled spaces using a
  parametric analytical solution of a diffusion equation}}}
  \bibinfo{journal}{The Journal of the Acoustical Society of America}
  \textbf{135}(5), \bibinfo{pages}{2765--2776}, \dodoi{10.1121/1.4870706}.

\bibitem[{Prawda \emph{et~al.}(2022)Prawda, Schlecht, and
  Välimäki}]{Prawda:2022.Calibrating}
\bibinfo{author}{Prawda, K.}, \bibinfo{author}{Schlecht, S.~J.},  and
  \bibinfo{author}{Välimäki, V.} (\textbf{\bibinfo{year}{2022}}).
  \enquote{\bibinfo{title}{{Calibrating the Sabine and Eyring formulas}}}
  \bibinfo{journal}{The Journal of the Acoustical Society of America}
  \textbf{152}(2), \bibinfo{pages}{1158--1169}, \dodoi{10.1121/10.0013575}.

\bibitem[{Rabenstein and Sch{\"a}fer(2023)}]{Rabenstein_Schaefer_MDSS_23}
\bibinfo{author}{Rabenstein, R.},  and \bibinfo{author}{Sch{\"a}fer, M.}
  (\textbf{\bibinfo{year}{2023}}). \emph{\bibinfo{title}{Multidimensional
  Signals and Systems -- Theory and Foundations}} (\bibinfo{publisher}{Springer
  Cham}, \bibinfo{address}{Cham, Switzerland}).

\bibitem[{Savioja and Svensson(2015)}]{Savioja:2015ft}
\bibinfo{author}{Savioja, L.},  and \bibinfo{author}{Svensson, U.~P.}
  (\textbf{\bibinfo{year}{2015}}). \enquote{\bibinfo{title}{{Overview of
  geometrical room acoustic modeling techniques}}} \bibinfo{journal}{J. Acoust.
  Soc. Am.} \textbf{138}(2), \bibinfo{pages}{708 -- 730},
  \dodoi{10.1121/1.4926438}.

\bibitem[{Sch\"{a}fer \emph{et~al.}(2023)Sch\"{a}fer, Prawda, Rabenstein, and
  Schlecht}]{schaefer2023}
\bibinfo{author}{Sch\"{a}fer, M.}, \bibinfo{author}{Prawda, K.},
  \bibinfo{author}{Rabenstein, R.},  and \bibinfo{author}{Schlecht, S.~J.}
  (\textbf{\bibinfo{year}{2023}}). \enquote{\bibinfo{title}{Distribution of
  modal damping in absorptive shoebox rooms}} in \emph{\bibinfo{booktitle}{IEEE
  Workshop on Applications of Signal Processing to Audio and Acoustics
  (WASPAA)}}, \bibinfo{address}{New Paltz, NY, USA}.

\bibitem[{Schlecht \emph{et~al.}(2023)Schlecht, Prawda, Rabenstein, and
  Sch\"{a}fer}]{code}
\bibinfo{author}{Schlecht, S.~J.}, \bibinfo{author}{Prawda, K.},
  \bibinfo{author}{Rabenstein, R.},  and \bibinfo{author}{Sch\"{a}fer, M.}
  (\textbf{\bibinfo{year}{2023}}). \enquote{\bibinfo{title}{{Supplementary Code
  and Data}}} \bibinfo{journal}{GitHub}
  \dourl{https://sebastianjiroschlecht.github.io/ShoeboxEnergyDecay/}.

\bibitem[{Zhi \emph{et~al.}(2023)Zhi, Sharma, Zotkin, and
  Duraiswami}]{Zhi.2023:DiffISM}
\bibinfo{author}{Zhi, B.}, \bibinfo{author}{Sharma, A.},
  \bibinfo{author}{Zotkin, D.~N.},  and \bibinfo{author}{Duraiswami, R.}
  (\textbf{\bibinfo{year}{2023}}). \enquote{\bibinfo{title}{{A Differentiable
  Image Source Model for Room Acoustics Optimization}}} \bibinfo{journal}{2023
  IEEE Workshop on Applications of Signal Processing to Audio and Acoustics
  (WASPAA)} \textbf{00}, \bibinfo{pages}{1--5},
  \dodoi{10.1109/waspaa58266.2023.10248140}.

\end{thebibliography}

% \clearpage % just ensure 

% \section{Notes}
% \label{sec:notes}

% \subsection{Piece-wise Laplace Approximation}

% \subsection{2023-04-03}
% Here are some implementation notes on Inverse Laplace Transforms in Matlab \protect{\url{http://dsp.vscht.cz/konference_matlab/matlab05/prispevky/kotyk/kotyk.pdf}}

% Also, the Hunt paper has some  interesting looking equations compared to the Lehmann stuff.

% Compensate early energy with echo density to achieve a smooth EDC.

% Insight: Shoeboxes are directionally single slope 

% Maybe good to check: Spherical Harmonics Based Generalized Image Source Method for Simulating Room Acoustics

% \subsection{2023-04-13}
% Laplace Transform Numerical - also the rectangular pulse compression 

% Inverse Laplace Transform is super ill-conditioned.

% \subsection{2023-04-17}
% Predicting the reverberation time in rectangular rooms with non-uniform absorption distribution \cite{Zhou:2021.ReverbTime}

% Prediction of reverberance in rooms with simulated non-single-exponential sound decays \cite{Jeong:2017ga}

% \subsection{2023-05-04}
% Further, additions from the meeting: the derivation of the equivalence of ISM and the simple mode decomposition. Also, an analytical expression of the damping density based on the modal decomposition. 

% \subsection{2023-05-23}
% Triangle estimation of damping density, Plot the density for each phi.

% \subsection{2023-07-09}
% Theoretically, there is an analytic way to express the power response directly by applying the 'Laplace transform' to the arcsin, see \url{https://www.quora.com/What-is-the-Laplace-transform-of-sin-1-t}. However, the suggested result looks quite ugly.

\end{document}